\newtheorem{theorem}{Theorem}[section]
\newtheorem{definition}{Definition}[section]
\newtheorem{lemma}{Lemma}[theorem]
\def\tr{{\rm tr}} 
\def\rank{{\rm rank}} 
\def\Pr{{\mathbb{P}}} 
\def\R{{\mathbb{R}}} 
\newcommand{\alg}{\mathcal{M}}
\newcommand{\mech}{\mathcal{M}}
\newcommand{\eps}{\varepsilon}
\newcommand{\E}{\mathbb{E}}
\newcommand{\univ}{U}
\def\b0{{\bf 0}}
\newcommand\junk[1]{}
\newcommand{\maps}{\colon}
\DeclareMathOperator{\specLB}{SpecLB}
\DeclareMathOperator{\conv}{conv}
\DeclareMathOperator{\err}{err}
\DeclareMathOperator{\opt}{opt}
\DeclareMathOperator{\polylog}{polylog}
\newcommand{\menote}[1]{}
\newcommand{\tra}{\intercal}
\renewcommand{\univ}{\mathcal{U}}
\newcommand{\quer}{\mathcal{Q}}
\newcommand{\row}{e}
\begin{document}
\title{An Improved Private Mechanism for Small Databases}
\author{Aleksandar Nikolov\\{Microsoft Research}\\{ Redmond, WA,
    USA}\\{alenik@microsoft.com}} 
\date{}
\maketitle              

\begin{abstract}
  We study the problem of answering a workload of linear queries
  $\quer$, on a database of size at most $n = o(|\quer|)$ drawn from a
  universe $\univ$ under the constraint of (approximate) differential
  privacy. Nikolov, Talwar, and Zhang~\cite{NTZ} proposed an efficient
  mechanism that, for any given $\quer$ and $n$, answers the queries
  with average error that is at most a factor polynomial in $\log
  |\quer|$ and $\log |\univ|$ worse than the best possible. Here we
  improve on this guarantee and give a mechanism whose competitiveness
  ratio is at most polynomial in $\log n$ and $\log |\univ|$, and has
  no dependence on $|\quer|$. Our mechanism is based on the projection
  mechanism of~\cite{NTZ}, but in place of an ad-hoc noise
  distribution, we use a distribution which is in a sense optimal for
  the projection mechanism, and analyze it using convex duality and
  the restricted invertibility principle.
\end{abstract}

\section{Introduction}

The central problem of private data analysis is to characterize to
what extent it is possible to compute useful information from
statistical data without compromising the privacy of the individuals
represented in the dataset. In order to formulate this problem
precisely, we need a database model and a definition of what it means
to preserve privacy. Following prior work, we model a database as a
multiset $D$ of $n$ elements from a universe $\univ$\junk{, i.e.~$D
  \in \univ^n$}, with each database element specifying the data of a
single individual. Defining privacy is more subtle. A definition which
has received considerable attention in recent years is
\emph{differential privacy}, which postulates that a randomized
algorithm preserves privacy if its distribution on outputs is almost
the same (in an appropriate metric) on any two input databases $D$ and
$D'$ that differ in the data of at most a single individual. The
formal definition is as follows:
\begin{definition}[\cite{DMNS}]\label{def:DP}
  Two databases $D$ and $D'$ are \emph{neighboring} if the size of
  their symmetric difference is at most one. A randomized algorithm
  $\alg$ satisfies \emph{$(\eps, \delta)$-differential privacy} if for any
  two neighboring databases $D$ and $D'$ and any measurable event $S$
  in the range of $\alg$,
  \begin{equation*}
    \Pr[\alg(D) \in S] \leq e^{\eps}\Pr[\alg(D') \in S] + \delta.
  \end{equation*}
  \junk{Above, probabilities are taken over the internal coin tosses of $\alg$.}
\end{definition}
Differential privacy has a number of desirable properties: it is
invariant under post-processing, the privacy loss degrades smoothly
under (possibly adaptive) composition, and the privacy guarantees hold
in the face of arbitrary side information. We will adopt it as our
definition of choice in this paper. We will work in the regime $\delta
> 0$, which is often called approximate differential privacy, to
distinguish it from pure differential privacy, which is the case
$\delta=0$. Approximate differential privacy provides strong semantic
guarantees when $\delta$ is  $n^{-\omega(1)}$: roughly speaking,
it implies that with probability at least $1 - O(n\sqrt{\delta})$, an
arbitrarily informed adversary cannot guess from the output of the
algorithm if any particular user is represented in the
database. See~\cite{GantaKS08} for a precise formulation of this
semantic guarantee.

We then turn to the question of understanding the constraints imposed
by privacy on the kinds of computation we can perform. We focus on
computing answers to a fundamental class of database queries: the
\emph{linear queries}, which generalize counting queries.  A counting
query counts the number of database elements that satisfy a given
predicate; a linear query allows for weighted
counts. Formally, a linear query is specified by a function
$q\maps\univ \to \R$ ($q\maps \univ \to \{0, 1\}$ in the case of
counting queries); slightly abusing notation, we define the value of
the query as $q(D) \triangleq \sum_{\row \in D}{q(e)}$ (elements of
$D$ are counted with multiplicity). We call a set $\quer$ of linear
queries a \emph{workload}, and an algorithm that answers a query
workload a \emph{mechanism}. 

Since the work of Dinur and
Nissim~\cite{DN}, it has been known that answering queries too
accurately can lead to very dramatic privacy breaches, and this is
true even for counting queries. For example, in~\cite{DN,DMT07} it was
shown that answering $\Omega(n)$ random counting queries with error
per query $o(\sqrt{n})$ allows an adversary to reconstruct a very
accurate representation of a database of size $n$, which contradicts
any reasonable privacy notion. On the other hand, a simple mechanism
that adds independent Gaussian noise to each query answer achieves
$(\eps, \delta)$-differential privacy and answers any set $\quer$ of
counting queries with average error
$O(\sqrt{|\quer|})$~\cite{DN,DworkN04,DMNS}.\footnote{Here and in the
  remainder of the introduction we ignore dependence of the error on
  $\eps$ and $\delta$.} While this is a useful guarantee for a small
number of queries, it quickly loses value when $|\quer|$ is much
larger than the database size, and becomes trivial for $\omega(n^2)$
queries. Nevetheless, since the seminal paper of Blum, Ligett and
Roth~\cite{BLR}, a long line of
work~\cite{DworkNRRV09,DworkRV10,RothR10,HardtR10,GuptaHRU11,HardtLM12,GuptaRU11}
has shown that even when $|\quer| = \omega(n)$, more sophisticated
private mechanisms can achieve error not much larger than
$O(\sqrt{n})$. For instance, there exist $(\eps,
\delta)$-differentially private mechanisms for linear queries that
acheive average error
$O(\sqrt{n}\log^{1/4}|\univ|)$~\cite{GuptaRU11}. There are sets of
counting queries for which this bound is tight up to factors
polylogarithmic in the size of the database~\cite{BunUV13}.

Specific query workloads allow for error which is much better than the
worst-case bounds. Some natural examples are queries counting the
number of points in a line interval or a $d$-dimensional axis-aligned
box~\cite{dwork-continual,ChanSS10,XiaoWG10}, or a $d$-dimensional
halfspace~\cite{halfspaces}. It is, therefore, desirable to have
mechanisms whose error bounds adapt \emph{both} to the query workload
and to the database size. In particular, if $\opt(n, \quer)$ is the
best possible average error\footnote{We give a formal definition later.}
achievable under differential privacy for the workload $\quer$ on
databases of size at most $n$, we would like to have a mechanism with
error at most a small factor larger than $\opt(n,\quer)$ for any $n$
and $\quer$. The first result of this type is due to Nikolov, Talwar,
and Zhang~\cite{NTZ}, who presented a mechanism running in time
polynomial in $|\univ|$, $|\quer|$, and $n$, with error at most
$\polylog(|\quer|, |\univ|)\cdot\opt(n,\quer)$.

Here we improve the results from~\cite{NTZ}:
\begin{theorem}[Informal]\label{thm:main}
  There exists a mechanism that, given a database of size $n$ drawn
  from a universe $\univ$, and a workload $\quer$ of linear queries,
  runs in time polynomial in $|\univ|$, $|\quer|$ and $n$, and has
  average error per query at most $\polylog(n, |\univ|)\cdot \opt(n,\quer)$.
\end{theorem}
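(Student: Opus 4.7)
The plan is to start from the projection mechanism of~\cite{NTZ} as the outer framework and swap its ad-hoc noise for an optimal Gaussian. Let $A \in \R^{|\quer| \times |\univ|}$ be the query matrix, so that the answers on a database with histogram $x \in \R^{|\univ|}$, $\|x\|_1 \le n$, are $Ax$. The mechanism releases $Ax + z$ for Gaussian $z \sim N(0,\Sigma)$ and then projects onto the polytope $K = \{Ax' : x' \ge 0,\ \|x'\|_1 \le n\}$. Approximate differential privacy forces $\Sigma$ to dominate the outer products of the columns of $A$ at scale $\Theta(\log(1/\delta)/\eps^2)$. A standard analysis of projection mechanisms bounds the expected average squared error by roughly $\E_z[\sup_{y \in K-K} \langle z, y\rangle]/|\quer|$, a Gaussian width type quantity.

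My first step would be to recast the choice of $\Sigma$ as a convex program: minimize this Gaussian width (or a tractable surrogate like $\tr(\Sigma)$ restricted to the relevant subspace) over positive semidefinite $\Sigma$ subject to the DP constraint $\max_i e_i^\tra A^\tra \Sigma^{-1} A e_i \le 1$. This is an SDP, and Lagrangian duality yields a dual whose variables are nonnegative weights $\lambda_i$ on the columns of $A$; the dual value becomes, up to constants, the smallest $\tau$ for which the reweighted matrix has all its spectral energy captured by an ellipsoid of trace at most $\tau$. In other words, duality turns the question of the ``best'' noise into a question about the spectrum of weighted column submatrices of $A$.

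Next I would show that this dual value is matched, up to a $\polylog(n,|\univ|)$ factor, by $\opt(n,\quer)$. The tool is the restricted invertibility principle (Bourgain--Tzafriri, in its refined Spielman--Srivastava form): from the optimal dual weights one extracts a submultiset of columns of $A$ of size $\Omega(n)$ that is well-conditioned in the Mahalanobis metric induced by $\Sigma^{-1}$. Each such column corresponds to a universe element, and taking the uniform distribution over this submultiset as a fractional database of size $n$ yields a packing of answer vectors whose pairwise distances are controlled from below. Standard packing/fingerprinting lower bounds for $(\eps,\delta)$-DP then produce a lower bound on $\opt(n,\quer)$ that matches the primal SDP value up to lower-order factors.

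The main obstacle is this last matching step. Classical restricted invertibility extracts subsets of size proportional to the stable rank, which could be as large as the ambient dimension, whereas here we must cap the extraction at $n$; what is needed is a \emph{hereditary}, or truncated, form that produces the largest well-conditioned subset of size at most $n$ inside the optimal weighting. Interlocking this with the dual of the covariance SDP, while ensuring the two sides degrade by only $\polylog(n,|\univ|)$, is where the real work lies. Crucially, $|\quer|$ never enters the bound: the convex program's effective dimension is the rank of $A$, which is at most $|\univ|$, and the packing argument is naturally scaled by the database size $n$.
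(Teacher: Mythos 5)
Your high-level plan matches the paper's: run a projection-style mechanism with Gaussian noise, choose the covariance by a convex program, take the Lagrangian dual, and match it to a spectral lower bound via the restricted invertibility principle. But several of the specifics diverge in ways that matter, and the one step you flag as ``where the real work lies'' is exactly the step the paper handles differently from what you suggest.

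First, the mechanism. You describe releasing $Ax+z$ and projecting onto the answer polytope $nK_A$. The paper's Algorithm~\ref{alg:proj} is not plain projection: it first splits along the projection $\Pi$ onto the top $\lfloor \eps n\rfloor$ eigenvectors of $\Sigma$, outputs the noisy answers raw in that subspace, and performs the least-squares projection only in the orthogonal complement. This split is load-bearing. With plain projection, Lemma~\ref{lm:lse} gives error $\lesssim \min\{\E\|w\|_2^2,\ n\E\|w\|_{K_A^\circ}\}$; the first term is the full trace $\tr\Sigma$ and the second scales with $n\sqrt{\sigma_1}$, and neither collapses to $\sum_{i\le k}\sigma_i$ for general $\Sigma$. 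With the split, the top part contributes exactly $\sum_{i\le k}\sigma_i$ and the projected bottom part contributes $O(n\sqrt{\log|\univ|})\,\sigma_k \lesssim \frac{n}{k}\sum_{i\le k}\sigma_i$, which gives the Ky Fan $k$-norm error $\|\Sigma\|_{(k)}$ cleanly (Lemma~\ref{lm:proj-err}). So the quantity you should be optimizing is $\|\Sigma\|_{(k)}$, not the Gaussian width of the answer polytope and not $\tr(\Sigma)$ ``restricted to a subspace''; this is precisely what the paper's SDP \eqref{eq:ellips-obj-small}--\eqref{eq:ellips-enclose-small} minimizes, and the Ky Fan norm is what makes the duality tractable (Lemma~\ref{lm:kyfan}, Lemma~\ref{lm:kyfan-subgr}).

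Second, the duality and the restricted invertibility step. You correctly sense the difficulty: vanilla restricted invertibility extracts subsets of size up to the stable rank, and you need subsets of size at most $\eps n$. You propose a ``hereditary/truncated'' variant but do not supply one. The paper's resolution is more structural. The dual value is $h_k(AQA^\tra)^2$, and $h_k$ is by construction a sum of two terms: $\sum_{i\le t}\lambda_i^{1/2}$ and $\sqrt{k-t}(\sum_{i>t}\lambda_i)^{1/2}$, where $t<k$ is chosen by Lemma~\ref{lm:singvals-thresh}. The proof of Theorem~\ref{thm:main-smalldb} takes whichever term is at least half and, in each case, applies restricted invertibility to $\Pi A$ for a projection $\Pi$ onto either the top-$t$ or the bottom eigenspace of $AQA^\tra$. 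In the first case $\rank(\Pi A)=t<k$, so Lemma~\ref{lm:bt-lb} automatically gives a subset of size at most $t$; in the second case the stable rank of $\Pi A Q^{1/2}$ is at least $k-t$ \emph{by the defining property of $t$}, so Theorem~\ref{thm:bt} with $\epsilon=1/2$ extracts a subset of size $\lfloor (k-t)/4\rfloor < k \le \eps n$. No hereditary version is needed; the cap at $n$ comes from the structure of the dual optimum, not from a new invertibility theorem. Finally, the lower bound used is the spectral/packing bound $\specLB$ of~\cite{NTZ}, not a fingerprinting bound; fingerprinting is mentioned in the paper only as a potential route to removing the $\log|\univ|$ factor.

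In short: right framework, but (i) you need the $\Pi$-split mechanism to get the Ky Fan $k$-norm as the error, (ii) the SDP objective must be exactly $\|X^{-1}\|_{(k)}$, and (iii) the size-$n$ truncation comes for free from the two-case structure of $h_k$ plus applying restricted invertibility to $\Pi A$, which is the concrete content missing from your last paragraph.
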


Notice that the competitiveness ratio in Theorem~\ref{thm:main} is
\emph{independent of the number of queries}, which can be
significantly larger than both $n$ and $|\univ|$. This type of
guarantee is easier to prove when $n = \Omega(|\quer|)$, because in that case
there exist nearly optimal mechanisms that are oblivious of the
database size~\cite{NTZ}. Therefore, we focus on the more
challenging regime of small databases, i.e.~$n = o(|\quer|)$. 

It is worth making a couple of remarks about the strength of
Theorem~\ref{thm:main}. First, in many applications the queries in
$\quer$ are represented compactly rather than by a truth table, and
$|\univ|$ is exponentially large in the size of a natural
representation of the input. In such cases, running time bounds which
are polynomial in $|\univ|$ may be prohibitive. Nevertheless, our work
still gives interesting information theoretic bounds on the optimal
error, and, moreover, our mechanism can be a starting point for
developing more efficient variants. Furthermore, under a plausible
complexity theoretic hypothesis, our running time guarantee is the
best one can hope for without making further assumptions on
$\quer$~\cite{Ullman13}. A second remark is that our optimal error
guarantees are in terms of \emph{average} error, while many papers in
the literature consider worst-case error. Proving a result analogous
to Theorem~\ref{thm:main} for worst-case error remains an interesting
open problem.

\junk{Another
interesting problem is to remove the dependence on the universe size
in the competitiveness ratio. It is plausible that this can be done
with the projection mechanism and a well-chosen Gaussian noise
distribution, but we would need tighter lower bounds, possibly based
on fingerprinting codes as in~\cite{BunUV13}.}




\paragraph{Techniques.}
Following the ideas of~\cite{NTZ}, our starting point is a
generalization of the well-known Gaussian noise mechanism, which adds
appropriately scaled correlated Gaussian noise to the queries. By
itself, this mechanism is sufficient to guarantee privacy, but its
error is too large when $n = o(|\quer|)$. The main insight
of~\cite{NTZ} was to use the knowledge that the database is small to
reduce the error via a post-processing step. The post-processing is a
form of regression: we find the vector of answers that is closest to the
noisy answers while still consistent with the database size bound.
(In fact the estimator is slightly more complicated and related to the
hybrid estimator of Zhang~\cite{Zhang13-hybrid}.)  Intuitively, when
$n$ is small compared to the number of queries, this regression step
cancels a significant fraction of the error.

Our first novel contribution is to analyze the error of this mechanism
for arbitrary noise distributions and formulate it as a convex
function of the covariance matrix of the noise. Then we write a convex
program that captures the problem of finding the covariance matrix for
which the performance of the mechanism is optimized on the given query
workload and database size bound. We use Gaussian noise with this
optimal covariance in place of the recursively constructed ad-hoc
noise distribution\footnote{The distribution in~\cite{NTZ} is
  independent of the database size bound. This could be a reason why
  their guarantees scale with $\log |\quer|$ rather than $\log n$.}
from~\cite{NTZ}. Finally, we relate the dual of the convex program to
a spectral lower bound on $\opt(n, \quer)$ via the restricted
invertibility principle of Bourgain and Tzafriri~\cite{bour-tza}. We
stress that while the restricted invertibility principle was used
in~\cite{NTZ} as well, here we need a new argument which works for the
optimal covariance matrix we compute and gives a smaller
competitiveness ratio.

In addition to the improvement in the competitiveness ratio, our
approach here is more direct and we believe it gives a better
understanding of the performance of the regression-based mechanism for
small databases.

\section{Preliminaries}

We use capital letters for matrices and lower-case letters for vectors
and scalars. We use $\langle \cdot, \cdot \rangle$ for the standard
inner product between vectors in $\R^n$. For a matrix $M \in
\R^{m\times n}$ and a set $S \subseteq [n]$, we use $M_S$ for the
submatrix consisting of the columns of $A$ indexed by elements of
$S$. We use the notation $M \succ 0$ to denote that $M$ is a positive
definite matrix, and $M\succeq 0$ to denote that it is positive
semidefinite. We use $\sigma_{\min}(M)$ for the smallest singular
value of $M$, i.e.~$\sigma_{\min}(M) \triangleq
\min_{x}{\|Mx\|_2/\|x\|_2}$. We use $\tr(\cdot)$ for the trace operator, and
$\|M\|_2$ for the $\ell_2\to\ell_2$ operator norm of $M$,
i.e.~$\|M\|_2 \triangleq \max_{x}{\|Mx\|_2/\|x\|_2}$.

The distribution of a multivariate Gaussian
with mean $\mu$ and covariance $\Sigma$ is denoted $N(\mu, \Sigma)$.

\subsection{Histograms, the Query Matrix, and the Sensitivity Polytope}

It will be convenient to encode the problem of releasing answers to
linear queries using linear-algebraic notation. A common and very
useful representation of a database $D$ is the
\emph{histogram representation}: the histogram of $D$ is a vector $x
\in \R^\univ$ such that for any $\row \in \univ$, $x_\row$ is equal to
the number of copies of $\row$ in $D$. Notice that $\|x\|_1 = n$ and
also that if $x$ and $x'$ are respectively the histograms of two
neighboring databases $D$ and $D'$, then $\|x - x'\|_1 \leq 1$ (here
$\|x\|_1 = \sum_{\row}{|x_\row|}$ is the standard $\ell_1$
norm). Linear queries are a linear transformation of $x$. More
concretely, let us define the \emph{query matrix} $A \in \R^{\quer
  \times \univ}$ associated with a set of linear queries $\quer$ by
$a_{q, \row} = q(\row)$. Then it is easy to see that the vector $Ax$
gives the answers to the queries $\quer$ on a database $D$ with
histogram $x$.

Since this does not lead to any loss in generality, for the remainder
of this chapter we will assume that databases are given to mechanisms
as histograms, and workloads of linear queries are given as query
matrices. We will identify the space of size-$n$ databases with
histograms in the scaled $\ell_1$ ball $nB_1^\univ \triangleq \{x\in
\R^\univ: \|x\|_1 \leq n\}$, and we will identify neighboring
databases with histograms $x,x'$ such that $\|x-x'\|_1 \leq 1$.

The \emph{sensitivity polytope} $K_A$ of a query matrix $A \in
\R^{\quer\times \univ}$ is the convex hull of the columns of $A$ and
the columns of $-A$. Equivalently, $K_A \triangleq AB_1^\univ$,
i.e.~the image of the unit $\ell_1$ ball in $\R^\univ$ under
multiplication by $A$. Notice that $nK_A = \{Ax: \|x\|_1 \leq n\}$ is
the symmetric convex hull\footnote{The symmetric convex hull of a set
  of points $v_1,\ldots,v_N$ is equal to the convex hull of $\pm v_1,
  \ldots, \pm v_N$.} of the possible vectors of query answers to the
queries in $\quer$ on databases of size at most $n$.

\subsection{Measures of Error and the Spectral Lower Bound}

As our basic notion of error we will consider mean squared error.
For a mechanism $\mech$ and a subset $X\subseteq \R^\univ$, let us
define the error with respect to the query matrix $A \in
\R^{\quer\times \univ}$ as
\begin{equation*}
  \err(\mech, X, A) \triangleq \sup_{x \in X} \Biggl(\E\frac{1}{|\quer|} \|Ax -
    \mech(A, x)\|_2^2\Biggr)^{1/2}.
\end{equation*}
where the expectation is taken over the random coins of
$\mech$.
We also write $\err(\mech,nB_1^\univ, A)$ as $\err(\mech, n, A)$. The optimal
error achievable by any $(\eps, \delta)$-differentially private
mechanism for the query matrix $A$ and databases of size up to $n$ is
\begin{equation*}
  \opt_{\eps, \delta}(n, A) \triangleq \inf_{\mech} \err(\mech, n, A),
\end{equation*}
where the infimum is taken over all $(\eps, \delta)$-differentially
private mechanisms $\mech$.

Arguing directly about $\opt_{\eps, \delta}(n, A)$ appears
difficult. For this reason we use the following spectral lower bound
from~\cite{NTZ}. This lower bound was implicit in previous papers, for example~\cite{shiva2010}.

\begin{theorem}[\cite{NTZ}]\label{thm:speclb-small}
  There exists a constant $c$ such that for any query matrix $A \in
  \R^{\quer \times \univ}$, any small enough $\eps$, and any $\delta$
  small enough with respect to $\eps$, $\opt_{\eps, \delta}(n, A) \geq
  (c/\eps) \specLB(\eps n, A)$, where
  \begin{equation*}
    \specLB(k, A) \triangleq \max_{\substack{S \subseteq \univ\\|S| \leq
        k}} {\sqrt{{k}/{|\quer|}}\ \sigma_{\min}(A_S)}.
  \end{equation*}

\end{theorem}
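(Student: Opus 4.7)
I would prove this lower bound via a packing argument for $(\eps,\delta)$-differential privacy applied to a carefully scaled family of databases supported on a witness subset $S$. Let $S \subseteq \univ$ with $k' := |S| \leq \eps n$ achieve the maximum in $\specLB(\eps n, A)$, and write $B = A_S$ and $\sigma := \sigma_{\min}(B)$. The aim is to exhibit many valid databases whose query-answer vectors $Ax$ are pairwise far apart in $\ell_2$ yet close enough in $\ell_1$ that group privacy bars any low-error mechanism from distinguishing them.

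\textbf{Construction and distances.} By the Gilbert--Varshamov bound, fix a binary code $\mathcal{C} \subseteq \{0,1\}^S$ of size $|\mathcal{C}| \geq 2^{c_0 k'}$ with minimum Hamming distance at least $k'/4$. For a positive integer scale $m$ to be chosen, define $x^{(c)} \in \Z_{\geq 0}^\univ$ by $(x^{(c)})_i = m c_i$ for $i \in S$ and zero otherwise; each is a valid size-$\leq n$ database as long as $m k' \leq n$. For $c \neq c' \in \mathcal{C}$, one has $\|x^{(c)} - x^{(c')}\|_1 = m\|c-c'\|_1 \leq m k'$, and since $c - c' \in \{-1,0,1\}^S$ gives $\|c-c'\|_2 = \sqrt{\|c-c'\|_1}$, the spectral gap of $B$ yields
\[ \|A(x^{(c)} - x^{(c')})\|_2 = m\|B(c-c')\|_2 \geq m\sigma\sqrt{k'/4}. \]

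\textbf{DP packing bound, parameter choice, and obstacle.} A standard packing lemma (cf.~\cite{NTZ}) shows that any $(\eps,\delta)$-DP mechanism with $\delta$ subpolynomial in $n$ that achieved average per-query error less than $\Omega(m\sigma\sqrt{k'/|\quer|})$ on every $x^{(c)}$ could be used, via nearest-codeword decoding, to identify the input, violating group privacy as soon as $\log|\mathcal{C}| \gtrsim \eps m k' + \log(1/\delta)$. Together with $\log|\mathcal{C}| = \Omega(k')$ this forces $m = O(1/\eps)$. Choosing $m = \Theta(1/\eps)$ (compatible with $m k' \leq n$ since $k' \leq \eps n$) then yields average per-query error at least $\Omega(\sigma\sqrt{k'/|\quer|}/\eps)$. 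When the max in $\specLB$ is attained at $k' = \Theta(\eps n)$, this already matches $\Omega((1/\eps)\sqrt{\eps n/|\quer|}\,\sigma) = \Omega((1/\eps)\specLB(\eps n, A))$; for $k' \ll \eps n$ I would instead take a larger scale $m \asymp n/k'$ with only a constant-sized packing (a two-database indistinguishability argument, valid in the small-$\eps n$ regime, which is the role of the hypothesis that $\eps$ is small enough). The main obstacle is the simultaneous parameter balancing between $m$, $|\mathcal{C}|$, and the $\ell_1$-distance $m k'$, so that the $1/\eps$ amplification and the $\sqrt{\eps n}$ factor both emerge; the $\log(1/\delta)$ term in the packing lemma is what necessitates the hypothesis that $\delta$ is small enough with respect to $\eps$.
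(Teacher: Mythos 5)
First, note that the paper does not prove Theorem~\ref{thm:speclb-small}: it is imported from \cite{NTZ} and used as a black box, so there is no in-paper argument to compare against. Your packing route---a binary code on a witness column set $S$, scaled integer histograms, and a group-privacy/packing bound---is the standard way such spectral lower bounds are derived, and the main-case computation is sound: with $k'=|S|$, scale $m=\Theta(1/\eps)$, and a Gilbert--Varshamov code of size $2^{\Omega(k')}$ and minimum distance $\Omega(k')$, the databases are pairwise at $\ell_1$-distance $O(k'/\eps)$, at $\ell_2$-query-distance $\Omega(\sigma\sqrt{k'}/\eps)$, and the packing bound forces per-query RMS error $\Omega\bigl(\sigma\sqrt{k'/|\quer|}/\eps\bigr)$.

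The step that fails is your fix for $k'\ll\eps n$. Taking $m\asymp n/k'$ makes the pairwise $\ell_1$-distance $mk'=\Theta(n)$, so the group-privacy factor between any two of your databases is $e^{\Theta(\eps n)}$, which is vacuous whenever $\eps n\gg 1$; a two-database indistinguishability argument gives nothing in that regime, and keeping the group distance $O(1/\eps)$ forces $m=O(1/\eps)$ and hence $\ell_2$-separation $O(\sigma\sqrt{k'}/\eps)$, not $\Omega(\sigma\sqrt{\eps n}/\eps)$. I believe this is not a gap in your method but an artifact of how $\specLB$ is typeset here: as literally written, the prefactor is $\sqrt{k/|\quer|}$ with $k=\eps n$ fixed, and since $\sigma_{\min}(A_S)$ is non-increasing under adding columns the max is attained at $|S|=1$; the statement would then be false (for a single counting query, $|\quer|=1$, it would claim a lower bound of order $\sqrt{n/\eps}$, contradicting the $O(\sqrt{\log(1/\delta)}/\eps)$ Gaussian mechanism once $\eps n\gg 1$). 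The definition actually used in \cite{NTZ}, and all that the proof of Theorem~\ref{thm:main-smalldb} relies on, has $\sqrt{|S|/|\quer|}$ under the maximum; with that reading your single main-case argument already gives $\Omega(\specLB(\eps n,A)/\eps)$ for every $k'\le\eps n$ and no separate small-$k'$ case is needed. One minor correction: the threshold ``$\log|\mathcal{C}|\gtrsim\eps m k'+\log(1/\delta)$'' is the wrong shape; the two requirements are $\log|\mathcal{C}|\gtrsim\eps m k'$ and, separately, $\delta$ small enough (roughly $\delta\ll e^{-\eps m k'}$)---$\log(1/\delta)$ should not be added to the budget that $\log|\mathcal{C}|$ must exceed, and it is this separate smallness condition on $\delta$ that explains the ``$\delta$ small enough'' hypothesis.
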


\subsection{Composition and the Gaussian Mechanism}

An important basic property of differential privacy is that the
privacy guarantees degrade smoothly under composition and are not
affected by post-processing.

\begin{lemma}[\cite{DMNS,odo}]
  \label{lm:simple-composition}
  Let $\mech_1(\cdot)$ satisfy $(\eps_1, \delta_1)$-differential privacy, and
  $\mech_2(x,\cdot) $ satisfy $(\eps_2, \delta_2)$-differential
  privacy for any fixed $x$. Then the mechanism
  $\mech_2(\mech_1(D), D)$ satisfies
  $(\eps_1 + \eps_2, \delta_1 + \delta_2)$-differential privacy.
\end{lemma}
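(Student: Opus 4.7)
The plan is to unfold the composed mechanism one layer at a time, conditioning on the intermediate output. Fix any pair of neighboring databases $D, D'$ and any measurable event $S$ in the joint range. Writing the composition as an iterated expectation, I would first express
\begin{equation*}
\Pr[\mech_2(\mech_1(D), D) \in S] \;=\; \E_{y \sim \mech_1(D)}\bigl[\Pr[\mech_2(y, D) \in S_y]\bigr],
\end{equation*}
where $S_y$ is the slice of $S$ obtained by fixing the first coordinate to $y$. Since for each fixed $y$ the mechanism $\mech_2(y, \cdot)$ is $(\eps_2, \delta_2)$-differentially private, I can bound the inner probability pointwise by $e^{\eps_2}\Pr[\mech_2(y, D') \in S_y] + \delta_2$, and then take expectation over $y \sim \mech_1(D)$.

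At this stage, the remaining task is to transfer the outer expectation from $\mech_1(D)$ to $\mech_1(D')$. For this I would invoke the standard fact that $(\eps_1, \delta_1)$-differential privacy of $\mech_1$ implies, for any $g \maps \text{range}(\mech_1) \to [0,1]$, the functional inequality $\E_{\mech_1(D)}[g] \leq e^{\eps_1}\E_{\mech_1(D')}[g] + \delta_1$. This follows from the layer-cake identity $\E[g] = \int_0^1 \Pr[g > t]\,dt$ and applying the differential privacy definition to each super-level set $\{g > t\}$. Setting $g(y) = \Pr[\mech_2(y, D') \in S_y] \in [0,1]$ and combining with the previous display yields the desired bound on $\Pr[\mech_2(\mech_1(D), D) \in S]$ in terms of $\Pr[\mech_2(\mech_1(D'), D') \in S]$, with privacy parameter $\eps_1 + \eps_2$.

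The one mildly delicate point is recovering exactly $\delta_1 + \delta_2$ on the additive slack: the naive chain above produces $e^{\eps_2}\delta_1 + \delta_2$ instead, because the $\delta_2$ term from the inner step gets multiplied by nothing while the $\delta_1$ term from the outer step picks up the leading $e^{\eps_2}$ factor. To obtain the clean additive form I would decompose each output distribution of $\mech_1$ into a $(1-\delta_1)$-weight ``good'' part with pointwise density ratio bounded by $e^{\eps_1}$, plus a $\delta_1$-weight residual, using the equivalent ``bad event'' characterization of approximate differential privacy; the composition then carries through using only the pure DP inequality on the good parts, and the bad events on each side contribute $\delta_1$ and $\delta_2$ additively. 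This refinement is the only nontrivial step; the rest of the argument is essentially bookkeeping with conditional expectations.
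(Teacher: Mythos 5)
The paper states this lemma without proof, citing \cite{DMNS,odo}, so there is no internal argument to compare yours against. Your sketch is the standard argument for adaptive basic composition and the structure is sound: conditioning on the intermediate output $y$, using $(\eps_2,\delta_2)$-DP of $\mech_2(y,\cdot)$ pointwise, and transferring the outer expectation via the functional inequality $\E_{\mech_1(D)}[g]\le e^{\eps_1}\E_{\mech_1(D')}[g]+\delta_1$ (whose layer-cake proof you give correctly). You also correctly diagnose the subtlety: this naive chain gives additive slack $e^{\eps_2}\delta_1+\delta_2$, or $\delta_1+e^{\eps_1}\delta_2$ if the two bounds are applied in the other order, and some interleaving is needed to recover $\delta_1+\delta_2$ exactly.

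Where I would push back is on the precise form of the decomposition you invoke for the fix. You propose to ``decompose each output distribution of $\mech_1$ into a $(1-\delta_1)$-weight good part with pointwise density ratio bounded by $e^{\eps_1}$, plus a $\delta_1$-weight residual.'' Read as a symmetric decomposition of both $\mech_1(D)$ and $\mech_1(D')$ into $(\eps_1,0)$-close good parts of common weight $1-\delta_1$, this is not what $(\eps_1,\delta_1)$-DP hands you directly: the two good sub-measures need not have equal mass, and forcing them into equal-weight probability densities costs constants that depend on $\eps_1$. What actually suffices, and is cleanest, is the one-sided, asymmetric decomposition of $\mech_1(D)$ alone against $\mech_1(D')$. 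Let $p,p'$ be the densities of $\mech_1(D),\mech_1(D')$ and set $\tilde p \triangleq \min(p,\,e^{\eps_1}p')$. Then $p-\tilde p=(p-e^{\eps_1}p')^{+}$ has total mass at most $\delta_1$ by the very definition of $(\eps_1,\delta_1)$-DP, while $\tilde p\le e^{\eps_1}p'$ holds pointwise and $\int\tilde p\le 1$. Writing $g(y)\triangleq\Pr[\mech_2(y,D)\in S_y]$ and $g'(y)\triangleq\Pr[\mech_2(y,D')\in S_y]$, and using $g\le 1$ and the pointwise bound $g\le e^{\eps_2}g'+\delta_2$, one gets
\[
\int g\,p \;\le\; \int g\,\tilde p + \delta_1 \;\le\; \int\bigl(e^{\eps_2}g'+\delta_2\bigr)\tilde p + \delta_1 \;\le\; e^{\eps_1+\eps_2}\int g'\,p' + \delta_2\!\int\tilde p + \delta_1 \;\le\; e^{\eps_1+\eps_2}\int g'\,p' + \delta_1 + \delta_2,
\]
which is exactly the claimed bound. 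With this substitution for your final step, the argument is complete; the rest of your outline is correct as written.
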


A basic method to achieve $(\eps, \delta)$-differential privacy is the
Gaussian mechanism. We  use the following generalized variant,
introduced in~\cite{NTZ}.

\begin{theorem}[\cite{DN,DworkN04,DMNS,NTZ}]\label{thm:gaussian}
  Let $\quer$ be a set of queries with query matrix $A$, and let
  $\Sigma \in \R^{\quer \times \quer}$, $\Sigma\succ 0$, be such that
  $a_e^T\Sigma^{-1}a_e \leq 1$ for all columns $a_e$ of $A$. Then the
  mechanism $\mech_\Sigma(A, x) = Ax + w$ where $w \sim N(0,c_{\eps,
    \delta}^2\Sigma)$ and $c_{\eps, \delta}\triangleq \frac{0.5\sqrt{\eps} +
    \sqrt{2\ln(1/\delta)}}{\eps}$ satisfies $(\eps,
  \delta)$-differential privacy.
\end{theorem}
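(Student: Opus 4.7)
The plan is to reduce the claim to the standard Gaussian mechanism analysis for the $\ell_2$ sensitivity by ``whitening'' the noise, so that the only nontrivial step is showing that the hypothesis $a_e^T\Sigma^{-1}a_e \le 1$ on the columns implies a Mahalanobis bound on the difference $A(x-x')$ for neighboring $x,x'$.

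First, fix neighboring histograms $x,x'$ with $\|x-x'\|_1 \le 1$, and write $\Delta = A(x-x')$. Since $\|x-x'\|_1 \le 1$, the point $\Delta$ lies in the sensitivity polytope $K_A = \conv\{\pm a_e : e \in \univ\}$. The quadratic form $y \mapsto y^T \Sigma^{-1} y$ is convex, so it attains its maximum over $K_A$ at a vertex; by the hypothesis this maximum is at most $1$, hence
\[
\Delta^T \Sigma^{-1} \Delta \le 1.
\]

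Next, write $\Sigma = \Sigma^{1/2}\Sigma^{1/2}$ and consider the post-processing $z \mapsto \Sigma^{-1/2} z$ applied to the mechanism. Since differential privacy is closed under post-processing, it suffices to show that the mechanism $x \mapsto \Sigma^{-1/2}Ax + \tilde w$ with $\tilde w \sim N(0, c_{\eps,\delta}^2 I)$ is $(\eps,\delta)$-differentially private. The $\ell_2$-sensitivity of $x \mapsto \Sigma^{-1/2}Ax$ between the neighboring $x,x'$ is
\[
\|\Sigma^{-1/2}\Delta\|_2 = \sqrt{\Delta^T \Sigma^{-1} \Delta} \le 1,
\]
so we have reduced to the classical spherical Gaussian mechanism with $\ell_2$-sensitivity at most $1$ and noise standard deviation $c_{\eps,\delta}$.

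Finally, invoke the standard privacy analysis of the spherical Gaussian mechanism (e.g., via a direct bound on the privacy loss random variable, which is a Gaussian with mean $\eps^2/(2 c_{\eps,\delta}^2 \cdot \|\Sigma^{-1/2}\Delta\|_2^{-2})$ and matching variance). With the choice $c_{\eps,\delta} = (0.5\sqrt{\eps} + \sqrt{2\ln(1/\delta)})/\eps$, a Chernoff/Gaussian tail computation shows that the privacy loss exceeds $\eps$ with probability at most $\delta$, yielding $(\eps,\delta)$-differential privacy. The only potential subtlety is verifying that this particular constant (as opposed to the more common $\sqrt{2\ln(1.25/\delta)}/\eps$) suffices; this follows from a careful application of the standard tail bound, and is the one step where one must track constants rather than rely on a completely routine reduction.
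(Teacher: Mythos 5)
Your approach is the standard one and matches how this result is proved in the cited literature (in particular \cite{NTZ}): reduce the Mahalanobis distance bound from the polytope vertices to all of $K_A$ by convexity, whiten by $\Sigma^{-1/2}$ to reduce to the spherical Gaussian mechanism with $\ell_2$-sensitivity $\le 1$, and finish with a tail bound on the Gaussian privacy-loss random variable. Two small slips worth fixing in a full write-up. First, the post-processing direction is stated backwards: you want to observe that $\mech_\Sigma(x) = \Sigma^{1/2}\bigl(\Sigma^{-1/2}Ax + \tilde w\bigr)$ is a post-processing of the spherical mechanism, so privacy of the latter implies privacy of the former (of course, since $\Sigma^{-1/2}$ is a bijection, the two are equivalent anyway). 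Second, the stated mean of the privacy-loss random variable is garbled; writing $\beta = \|\Sigma^{-1/2}A(x-x')\|_2 \le 1$, the loss is $N\bigl(\beta^2/(2c_{\eps,\delta}^2),\, \beta^2/c_{\eps,\delta}^2\bigr)$, and with the tail bound $\Pr[N(0,1) > t] \le e^{-t^2/2}$ one needs $\eps c_{\eps,\delta} - 1/(2c_{\eps,\delta}) \ge \sqrt{2\ln(1/\delta)}$, which with the paper's constant reduces to the side condition $\eps \le 8\ln(1/\delta)$; this holds in the regime the paper works in ($\eps$ a small constant, $\delta$ small), so the constant indeed suffices.
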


\section{The Projection Mechanism}

A key element in our mechanism is the use
of least squares estimation to reduce error on small databases. In this section we
introduce and analyze a mechanism based on least squares estimation,
similar to the hybrid estimator of~\cite{Zhang13-hybrid}. Essentially
the same mechanism was used in~\cite{NTZ}, but the definition and
analysis were tied to a particular noise distribution.

\begin{algorithm}
  \caption{Projection Mechanism    $\alg_{\Sigma}^{\text{proj}}$}\label{alg:proj}
  \begin{algorithmic}[1]
    \REQUIRE \emph{(Public)} Query matrix $A \in \R^{\quer \times
      \univ}$; matrix $\Sigma \succ 0$ such that $a_e^\tra \Sigma^{-1}
      a_e \leq$ for all columns $a_e$ of $A$.
    \REQUIRE \emph{(Private)} Histogram $x$ of a database of size
    $\|x\|_1 \leq n$.
    
    \STATE Run the generalized Gaussian mechanism
    (Theorem~\ref{thm:gaussian}) to compute $\tilde{y}
    \triangleq \mech_\Sigma(A,x)$;

    \STATE Let $\Pi$ be the orthogonal projection operator onto the
    span of the eigenvectors corresponding to the  $\lfloor \eps n
    \rfloor$ largest eigenvalues of $\Sigma$

    \STATE Compute $\bar{y} \in n(I-\Pi)K_A$, where $K_A$ is the sensitivity
    polytope of $A$, and $\bar{y}$ is
    \[
    \bar{y} = \arg\min\{\|z - (I-\Pi)\tilde{y}\|_2^2: z \in n(I-\Pi)K_A\}.
    \]

    \ENSURE  Vector of answers $\Pi \tilde{y} + \bar{y}$.
  \end{algorithmic}
\end{algorithm}

As shown in~\cite{NTZ,conjunctions}, Algorithm~\ref{alg:proj} can be
efficiently implemented using the ellipsoid algorithm or the
Frank-Wolfe algorithm.

To analyze the error of the Projection Mechanism, we use the following
key lemma, which appears to be standard in statistics (we refer
to~\cite{NTZ,conjunctions} for a proof). Recall that for a convex body
(compact convex set with non-empty interior) $L \subseteq \R^m$, the
\emph{Minkowski norm} (\emph{gauge function}) is defined by $\|x\|_L \triangleq
\min\{r: x\in rL\}$ for any $x \in \R^m$. The \emph{polar body} is $L^\circ
\triangleq \{y: \langle y, x\rangle \leq 1\ \forall x \in L\}$ and the
corresponding norm is also equal to the \emph{support function} of
$L$: $\|y\|_{L^\circ} \triangleq \max\{\langle y, x\rangle: x \in
L\}$. When $L$ is symmetric around $0$ (i.e.~$-L = L$), the Minkowski norm
and support function are both norms in the usual sense.

\begin{lemma}[\cite{NTZ,conjunctions}]\label{lm:lse}
  Let $L \subseteq \R^m$ be a symmetric convex body, and let $y\in L,
  \tilde{y}\in \R^m$. Let, finally, $\bar{y} =\arg\min\{\|z -
  \tilde{y}\|_2^2: z \in L\}$. We have $\|\bar{y} - y\|_2^2 \leq
  4\min\{\|\tilde{y} - y\|_2^2, \|\tilde{y} - y\|_{L^\circ}\}.  $
\end{lemma}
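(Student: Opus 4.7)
The lemma contains two separate bounds, and I would prove each bound independently and then take the minimum.

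For the easier bound $\|\bar{y}-y\|_2^2 \le 4\|\tilde y - y\|_2^2$, I would appeal to the fact that $\bar{y}$ is the Euclidean projection of $\tilde y$ onto $L$ and that $y\in L$: by optimality, $\|\bar y - \tilde y\|_2 \le \|y - \tilde y\|_2$. A triangle inequality then gives $\|\bar y - y\|_2 \le \|\bar y - \tilde y\|_2 + \|\tilde y - y\|_2 \le 2\|\tilde y - y\|_2$, and squaring yields the claimed factor of $4$.

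For the more interesting bound $\|\bar y - y\|_2^2 \le 4\|\tilde y - y\|_{L^\circ}$, the plan is to use the standard variational characterization of the Euclidean projection onto the closed convex set $L$: for every $z \in L$,
\begin{equation*}
  \langle \tilde y - \bar y, z - \bar y\rangle \le 0.
\end{equation*}
Taking $z = y$ gives $\langle \bar y - \tilde y, \bar y - y\rangle \le 0$. Now decompose
\begin{equation*}
  \|\bar y - y\|_2^2 = \langle \bar y - y,\, \bar y - \tilde y\rangle + \langle \bar y - y,\, \tilde y - y\rangle.
\end{equation*}
The first term is nonpositive by the variational inequality, so it suffices to bound the second.

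To bound $\langle \bar y - y,\, \tilde y - y\rangle$, I would use the key structural fact that $\bar y - y \in 2L$: indeed, since $L$ is symmetric we have $-y \in L$, and by convexity $\tfrac12 \bar y + \tfrac12(-y) \in L$, so $\|\bar y - y\|_L \le 2$. Combining this with the definition $\|v\|_{L^\circ} = \sup_{x\in L}\langle v,x\rangle$, which gives the duality inequality $\langle v,x\rangle \le \|x\|_L\cdot \|v\|_{L^\circ}$, we get $\langle \bar y - y,\tilde y - y\rangle \le 2\|\tilde y - y\|_{L^\circ}$, so in fact $\|\bar y - y\|_2^2 \le 2 \|\tilde y - y\|_{L^\circ} \le 4\|\tilde y - y\|_{L^\circ}$. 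Taking the minimum of the two bounds completes the proof. The only mild subtlety is the step $\bar y - y \in 2L$, which crucially uses the symmetry hypothesis on $L$; without it, the conclusion would degrade.
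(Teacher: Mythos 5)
Your proof is correct and follows the standard argument from the cited references \cite{NTZ,conjunctions}; the paper itself defers the proof of this lemma to those sources. The key ingredients you use --- the variational (first-order optimality) inequality $\langle \tilde y - \bar y, z - \bar y\rangle \le 0$ for all $z\in L$, applied at $z=y$ to kill the $\langle \bar y - y,\bar y - \tilde y\rangle$ term; the symmetry and convexity of $L$ to conclude $\bar y - y\in 2L$; and the duality pairing $\langle v,x\rangle \le \|x\|_L\|v\|_{L^\circ}$ --- are exactly the ingredients of that proof, and the argument is sound. Two small remarks: as you note, your second bound actually gives the sharper constant $2$ rather than $4$; and for the first bound, nonexpansiveness of Euclidean projection onto a convex set (together with $y\in L$) would give $\|\bar y - y\|_2\le\|\tilde y - y\|_2$ directly with constant $1$, though your triangle-inequality route giving $4$ is also perfectly valid.
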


The next lemma gives our analysis of the error of the Projection
Mechanism.

\begin{lemma}\label{lm:proj-err}
  Assume $\Sigma \succ 0$ is such that $a_e^\tra \Sigma^{-1} a_e \leq
  1$ for all columns $a_e$ of $A$. Then the Projection Mechanism
  $\alg_\Sigma^{\text{proj}}$ in Algorithm~\ref{alg:proj} is $(\eps,
  \delta)$-differentially private. Moreover, for $\eps = O(1)$,
  \[
  \err(\alg_\Sigma^{\text{proj}}, n, A) =O\Biggl(\Bigl(1+\frac{\sqrt{\log |\univ|}}{\sqrt{\log
      1/\delta}}\Bigr)^{1/2}\Biggr)\cdot
  \Bigl(\frac{c^2_{\eps, \delta}}{|\quer|}\sum_{i \leq \eps n}{\sigma_i}\Bigr)^{1/2},
  \] where $\sigma_1 \ge \sigma_2\ge
  \ldots \ge \sigma_{|\quer|}$ are the eigenvalues of $\Sigma$.
\end{lemma}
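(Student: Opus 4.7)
The plan is as follows. Privacy is immediate: only Step~1 accesses the private histogram $x$, and it invokes the generalized Gaussian mechanism of Theorem~\ref{thm:gaussian}; Steps~2--3 post-process $\tilde y$ without further accessing $x$, so the output is $(\eps,\delta)$-differentially private. For the error analysis, let $y = Ax$ and $w = \tilde y - y \sim N(0, c_{\eps,\delta}^2 \Sigma)$, and write $k = \lfloor \eps n \rfloor$. Since $y/n \in K_A$ we have $(I-\Pi) y \in n(I-\Pi) K_A$, and writing the output as $\hat y = \Pi \tilde y + \bar y$ the error decomposes orthogonally as $\hat y - y = \Pi w + (\bar y - (I-\Pi) y)$, so
\[
\E \|\hat y - y\|_2^2 = \E\|\Pi w\|_2^2 + \E\|\bar y - (I-\Pi) y\|_2^2.
\]
Because $\Pi$ is the projection onto the top $k$ eigenvectors of $\Sigma$, $\Pi$ commutes with $\Sigma$ and the first summand equals $c_{\eps,\delta}^2 \tr(\Pi \Sigma) = c_{\eps,\delta}^2 \sum_{i \le k} \sigma_i$.

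For the second summand I would apply Lemma~\ref{lm:lse} to $L = n(I-\Pi) K_A$ (a symmetric convex body in the range of $I-\Pi$) using the support-function branch, not the $\ell_2$ branch. Since $K_A$ is the symmetric hull of the columns $\pm a_e$ of $A$ and $I-\Pi$ is self-adjoint, the dual norm evaluates to
\[
\|(I-\Pi) w\|_{(n(I-\Pi) K_A)^\circ} = n \max_e |\langle (I-\Pi) w, a_e\rangle|.
\]
For each column $a_e$, the hypothesis $a_e^\tra \Sigma^{-1} a_e \le 1$ lets me write $a_e = \Sigma^{1/2} v_e$ with $\|v_e\|_2 \le 1$; using that $\Pi$ commutes with $\Sigma$, the variance of the centered Gaussian $\langle (I-\Pi) w, a_e\rangle$ equals $c_{\eps,\delta}^2 v_e^\tra (I-\Pi) \Sigma^2 v_e$, which is at most $c_{\eps,\delta}^2 \sigma_{k+1}^2$. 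A standard Gaussian maximal inequality over the $2|\univ|$ random variables $\pm \langle (I-\Pi) w, a_e\rangle$ then yields $\E \max_e |\langle (I-\Pi) w, a_e\rangle| = O(c_{\eps,\delta} \sigma_{k+1} \sqrt{\log |\univ|})$.

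To combine the two estimates I bound $n \sigma_{k+1} \le (n/k) \sum_{i \le k} \sigma_i = O(\eps^{-1}) \sum_{i \le k} \sigma_i$, using $\sigma_{k+1} \le \sigma_k \le k^{-1} \sum_{i\le k}\sigma_i$ and $k \ge \eps n/2$. Substituting gives
\[
\E\|\hat y - y\|_2^2 \le c_{\eps,\delta}^2 \sum_{i\le k}\sigma_i + O\bigl(c_{\eps,\delta} \eps^{-1} \sqrt{\log |\univ|}\bigr) \sum_{i\le k}\sigma_i,
\]
and since $c_{\eps,\delta} = \Omega(\sqrt{\log(1/\delta)}/\eps)$ the ratio of the two terms is $O(\sqrt{\log |\univ|/\log(1/\delta)})$. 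Taking a square root and normalizing by $\sqrt{|\quer|}$ then yields the claimed bound. The main subtlety is the choice of the dual-norm branch of Lemma~\ref{lm:lse} for the $(I-\Pi)$-component: the $\ell_2$ branch would contribute $c_{\eps,\delta}^2 \sum_{i > k} \sigma_i$, which can be uncontrollably larger than $\sum_{i \le k}\sigma_i$, whereas the dual-norm bound simultaneously exploits the finite extreme-point structure of $K_A$ (only $|\univ|$ Gaussians to maximize over) and the sharp variance reduction from $\sigma_1$ down to $\sigma_{k+1}$ achieved by projecting out the top $k$ eigendirections of $\Sigma$.
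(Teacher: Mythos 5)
Your proof is correct and follows the paper's argument essentially step for step: the same orthogonal decomposition of the error, the same use of the support-function branch of Lemma~\ref{lm:lse} with $L=n(I-\Pi)K_A$, the same Gaussian maximal inequality over the $|\univ|$ columns, and the same averaging bound $n\sigma_{k+1}\leq(n/k)\sum_{i\leq k}\sigma_i$ at the end. The only cosmetic difference is that you obtain the variance bound $c_{\eps,\delta}^2\sigma_{k+1}^2$ directly via the substitution $a_e=\Sigma^{1/2}v_e$ and commutativity of $\Pi$ with $\Sigma$, whereas the paper combines two separate inequalities on $\|(I-\Pi)a_e\|_2^2$ and $\|(I-\Pi)\Sigma(I-\Pi)\|_2$ to get the (marginally weaker but equivalent) $c_{\eps,\delta}^2\sigma_k^2$.
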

\begin{proof} 
  To prove the privacy guarantee, observe that the output of
  $\alg_\Sigma^{\text{proj}}(A,x)$ is just a post-processing of the
  output of $\alg_\Sigma(A,x)$, i.e.~the algorithm does not access $x$
  except to pass it to $\alg_\Sigma(A,x)$. The privacy then follows
  from Theorem~\ref{thm:gaussian} and
  Lemma~\ref{lm:simple-composition}.

  Next we bound the error. Let $y\triangleq Ax$ be the true answers to
  the queries, and let $w \triangleq \tilde{y} - y \sim N(0, c_{\eps,
    \delta}^2\Sigma)$ be the random noise introduced by the
  generalized Gaussian mechanism.  By the Pythagorean theorem and
  linearity of expectation, the expected total squared error of the
  projection mechanism is
  \[
  \E\|\Pi \tilde{y} + \bar{y} - y\|_2^2 =   \E\|\Pi \tilde{y}  - \Pi
  y\|_2^2 +\E \|\bar{y} - (I-\Pi) y\|_2^2.
  \]
  Above and in the remainder of the proof expectations are taken
  with respect to the randomness in the choice of $w$.  We bound the
  two terms on the right hand side separately. We will show:
  \begin{align}
    \label{eq:top-eig}
    \E\|\Pi \tilde{y}  - \Pi  y\|_2^2 &=   c_{\eps, \delta}^2\sum_{i =
      1}^k{\sigma_i},\\
    \label{eq:bott-eig}
    \E \|\bar{y} - (I-\Pi) y\|_2^2 &= O\left(\frac{\sqrt{\log
          |\univ|}}{\sqrt{\log 1/\delta}}\right)c_{\eps, \delta}^2\sum_{i =
      1}^k{\sigma_i}.
  \end{align}
  \eqref{eq:top-eig} and \eqref{eq:bott-eig} together imply the error
  bound in the theorem.

  To prove \eqref{eq:top-eig}, observe that $\Pi\tilde{y} - \Pi y =
  \Pi w \sim N(0, c_{\eps,\delta}^2\Pi \Sigma \Pi)$. By the definition of $\Pi$, the
  non-zero eigenvalues of $\Pi \Sigma\Pi$ are $\sigma_1, \ldots,
  \sigma_{k}$ where $k \triangleq \lfloor \eps n\rfloor$. We have
  \[
  \E \|\Pi\tilde{y} - \Pi y\|_2^2 = c_{\eps, \delta}^2\tr(\Pi \Sigma \Pi) =
  c_{\eps, \delta}^2\sum_{i = 1}^k{\sigma_i}.
  \]

  To prove \eqref{eq:bott-eig} we appeal to Lemma~\ref{lm:lse}. Define
  $\tilde{K} \triangleq (I-\Pi)K_A$. With $n\tilde{K}$ in the place of
  $L$, the lemma implies that
  \begin{equation}\label{eq:err-bnd-lse}
  \E \|\bar{y} - (I-\Pi)y\|_2^2 \leq 4\E\|(I-\Pi)w\|_{(n\tilde{K})^\circ}
  = 4n\E\|(I-\Pi)w\|_{\tilde{K}^\circ},
  \end{equation}
  where we used the simple fact
  \[
  \|(I-\Pi)w\|_{(n\tilde{K})^\circ} = \sup_{z \in n\tilde{K}}{\langle
    (I-\Pi)w, z \rangle} = n\sup_{z \in \tilde{K}}{\langle (I-\Pi)w, z
    \rangle} = n\|(I-\Pi)w\|_{\tilde{K}^\circ}.\]
  $\tilde{K}$ is the convex hull of the columns of $(I-\Pi)A$ and the
  columns of $-(I-\Pi)A$. For any such column $(I-\Pi)a_e$ we have
  \[
    1 \geq a_e^\tra \Sigma^{-1} a_e \geq a_e^\tra(I-\Pi)\Sigma^{-1}(I-\Pi) a_e \geq
    \sigma_{k+1}^{-1}a_e^\tra(I - \Pi)a_e.
  \]
  The first inequality is by the assumption on $\Sigma$; the second
  follows because $\Sigma^{-1} - (I-\Pi)\Sigma^{-1}(I-\Pi) \succeq 0$;
  the third inequality is due to the fact that the smallest eigenvalue
  of $(I-\Pi)\Sigma^{-1}(I-\Pi)$ restricted to the range of $I-\Pi$ is
  $\sigma_{k+1}^{-1}$ by the choice of $\Pi$. Therefore,
  $\|(I-\Pi)a_e\|_2^2 \leq \sigma_{k+1}\leq \sigma_k$. Since a linear
  functional attains its maximum value over a polytope at a vertex, we
  have $\|(I-\Pi)w\|_{\tilde{K}^\circ} = \sup_{z \in
    \tilde{K}}{\langle (I-\Pi)w,z\rangle} = \max_{e\in \univ}{|\langle
    (I-\Pi)w, a_e\rangle}|$. Each inner product $\langle (I-\Pi)w,
  a_e\rangle$ is a centered Gaussian random variable with variance $
  \E\langle (I-\Pi)w, a_\row\rangle^2 = c_{\eps, \delta}^2a_\row^\tra
  (I-\Pi)\Sigma(I-\Pi)a_\row.  $ By the choice of $\Pi$, the largest
  eigenvalue of $(I-\Pi)\Sigma(I-\Pi)$ is $\sigma_{k+1} \leq
  \sigma_k$. From this fact and the inequality $\|(I-\Pi)a_e\|_2^2
  \leq \sigma_k$, we have that the variance of $\langle (I-\Pi)w,
  a_\row\rangle$ is at most $c_{\eps, \delta}^2\sigma_k^2$. By a
  standard concentration argument, we can bound the expectation of the
  maximum absolute value of the inner products as
  \[
  \E\|(I-\Pi)w\|_{\tilde{K}^\circ} =\E \max_{\row \in \univ}{|\langle
    (I-\Pi)w, a_\row\rangle|} = O(\sqrt{\log |\univ|})c_{\eps, \delta}\sigma_k.
  \]
  Plugging this into \eqref{eq:err-bnd-lse}, we get
  \[
  \E \|\bar{y} - (I-\Pi)y\|_2^2 =  O(\sqrt{\log |\univ|})c_{\eps, \delta}n\sigma_k.
  \]
  To show that this implies \eqref{eq:bott-eig}, observe that, by
  averaging, $c_{\eps, \delta}n\sigma_k \leq \frac{c_{\eps,
      \delta}n}{k} \sum_{i=1}^k{\sigma_i}$.  Since $k = \lfloor \eps n
  \rfloor$, $\frac{c_{\eps, \delta}n}{k} = O\Bigl(\frac{c_{\eps,
      \delta}^2}{\sqrt{\log 1/\delta}}\Bigr)$. This finishes the
  proof of \eqref{eq:bott-eig}, and, therefore, of the theorem.
\end{proof}

\section{Optimality of the Projection Mechanism}

In this section we show that we can choose a covariance matrix
$\Sigma$ so that $\alg_\Sigma^{\text{proj}}$ has nearly optimal
error:
\begin{theorem}\label{thm:main-smalldb}
  Let $\eps$ be a small enough constant and let $\delta =
  |\univ|^{o(1)}$ be small enough with respect to $\eps$. For any
  query matrix $A \in \R^{\quer \times \univ}$, and any database size
  bound $n$, there exists a covariance matrix $\Sigma \succ 0$ such
  that the Projection Mechanism $\alg_\Sigma^{\text{proj}}$ in
  Algorithm~\ref{alg:proj} is $(\eps, \delta)$-differentially private
  and has error
  \begin{align*}
  \err(\alg,n, A) &=
  O((\log n) (\log 1/\delta)^{1/4} (\log |\univ|)^{1/4})
  \cdot\frac{1}{\eps}\specLB(\eps n, A)\\
  &= O((\log n) (\log 1/\delta)^{1/4} (\log |\univ|)^{1/4}) \cdot\opt_{\eps, \delta}(n, A)
  \end{align*}
  Moreover, $\Sigma$ can be computed in time polynomial in $|\quer|$.
\end{theorem}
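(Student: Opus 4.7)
The plan is to follow the ``Techniques'' outline in the introduction: pose the choice of $\Sigma$ as a convex program, compute its Lagrangian dual, and then upper-bound the dual by $\specLB(\eps n, A)$ via the restricted invertibility principle. Composing with Lemma~\ref{lm:proj-err} and Theorem~\ref{thm:speclb-small} then yields the theorem.

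Setting $k = \lfloor \eps n \rfloor$, Lemma~\ref{lm:proj-err} reduces (up to polylogarithmic factors) to choosing $\Sigma \succ 0$ minimizing $\tau(\Sigma) := \sum_{i=1}^k \sigma_i(\Sigma)$ subject to $a_e^\tra \Sigma^{-1} a_e \leq 1$ for every column $a_e$ of $A$. Both the objective (a Ky Fan $k$-norm) and the constraints are convex in $\Sigma$, so this is a polynomial-time SDP solvable via the ellipsoid method; let $\Sigma^*$ and $\tau^* := \tau(\Sigma^*)$ denote its optimum. Using the variational formula $\tau(\Sigma) = \max_{P \in \mathcal{P}_k} \tr(\Sigma P)$ with $\mathcal{P}_k := \{P : 0 \preceq P \preceq I,\ \tr P = k\}$, Sion's minimax, the stationarity condition $\Sigma^{-1} B(y) \Sigma^{-1} = P$ with $B(y) := \sum_e y_e a_e a_e^\tra$, Ky Fan's formula for $\max_P \tr((PB(y)P)^{1/2})$, and a rescaling $y = s z$ with $z \in \Delta(\univ)$, $s \geq 0$, I would arrive at the clean dual expression
\[
\tau^* \;=\; \max_{z \in \Delta(\univ)} \Bigl(\sum_{i=1}^k \sqrt{\lambda_i(A \diag(z) A^\tra)}\Bigr)^2.
\]
At a joint optimum $(\Sigma^*, y^*)$ the KKT conditions give the structural identity $B(y^*) = \Sigma^* P^* \Sigma^*$, where $P^*$ is the top-$k$ eigenprojection of $\Sigma^*$; in particular $\tr(B(y^*)) = \sum_{i=1}^k \sigma_i(\Sigma^*)^2$ and $\sum_e y_e^* = \tau^*$.

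To bound $\tau^*$ I would partition the support of $y^*$ into $O(\log(n|\univ|))$ dyadic buckets by the value of $y_e^*$, truncating $e$ with $y^*_e$ below a $\poly^{-1}(n|\univ|)$ threshold (whose total contribution is negligible). By pigeonhole some bucket $F^*$ satisfies $\sum_{e \in F^*} y_e^* \|a_e\|^2 \gtrsim \tr(B(y^*))/\polylog$, and on $F^*$ the weights $y^*_e$ are equal up to a factor of $2$, so $\sum_{e \in F^*} \|a_e\|^2/|F^*| \gtrsim \tr(B(y^*))/(\tau^* \polylog)$. Applying the Bourgain--Tzafriri restricted invertibility principle to the column-submatrix $A_{F^*}$ then yields a set $S \subseteq F^*$, which may be taken with $|S| \leq k$ after passing to a subset (since removing columns only raises $\sigma_{\min}$), satisfying $\sigma_{\min}^2(A_S) \gtrsim \|A_{F^*}\|_F^2/|F^*| \gtrsim \tr(B(y^*))/(\tau^* \polylog)$. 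Combined with the Cauchy--Schwarz bound $\tau^{*2} \leq k \tr(B(y^*))$, this gives $\sigma_{\min}^2(A_S) \gtrsim \tau^*/(k \polylog)$, and consequently $\specLB(\eps n, A)^2 \geq (k/|\quer|)\sigma_{\min}^2(A_S) \gtrsim \tau^*/(|\quer| \polylog)$, which is the required comparison.

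The main obstacle is this last step: restricted invertibility bounds naturally the smallest singular value of the \emph{weighted} matrix $(A\sqrt{\diag(y^*)})_S$, whereas $\specLB$ sees the \emph{unweighted} matrix $A_S$. Dyadic bucketing reduces to a sub-universe $F^*$ where the weights are uniform up to a constant, permitting the passage to the unweighted spectrum at the cost of a polylog factor. The remaining polylogarithmic factors in the theorem come from composing this loss with the $(1 + \sqrt{\log|\univ|/\log(1/\delta)})^{1/2}$ factor in Lemma~\ref{lm:proj-err} and the noise scale $c_{\eps,\delta} = \Theta(\sqrt{\log(1/\delta)}/\eps)$ of Theorem~\ref{thm:gaussian}; careful bookkeeping (possibly with a second dyadic bucketing of the eigenvalues of $\Sigma^*$ to avoid an extra $\sqrt{k}$ loss from Cauchy--Schwarz) is needed to match the exact polylog factor in the theorem statement.
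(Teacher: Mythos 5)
Your overall plan matches the paper's, but the dual you write down is incorrect, and the error is consequential. The formula $\tau^* = \max_z\bigl(\sum_{i=1}^k\sqrt{\lambda_i(A\diag(z)A^\tra)}\bigr)^2$ (with $z$ in the simplex) would hold only if the optimizing $\Sigma^*$ had a spectral gap after its $k$-th eigenvalue, so that $\|X^{-1}\|_{(k)}$ were differentiable at $X^* = \Sigma^{*-1}$. Generically (and this is exactly why Lemma~\ref{lm:kyfan-subgr} is needed), the top-$k$ eigenvalues of $\Sigma^*$ tie with lower ones, the Ky Fan norm is non-smooth there, and the correct dual objective is the function $h_k(AQA^\tra)$ of Theorem~\ref{thm:nuclear-small}: for a matrix with eigenvalues $\sigma_1 \geq \cdots \geq \sigma_m$, this equals $\sum_{i\leq t}\sqrt{\sigma_i} + \sqrt{k-t}\bigl(\sum_{i>t}\sigma_i\bigr)^{1/2}$ for a threshold $t<k$. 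One always has $h_k(\Sigma) \geq \sum_{i\leq k}\sqrt{\sigma_i}$, and the gap is in general polynomially large (take $\sigma_1=\cdots=\sigma_m$ with $m\gg k$, where $h_k=\sqrt{km}$ but the sum is $k$). Since you need an \emph{upper} bound on $\tau^*$ and $\tau^* = \max_Q h_k(AQA^\tra)^2$ is the larger quantity, your stated equality underestimates $\tau^*$ and the argument fails precisely where it matters. The two-term structure of $h_k$ is, moreover, the organizing principle of the proof: it dictates a case split between the heavy top-$t$ eigenvalues, handled by the logarithmic Lemma~\ref{lm:bt-lb}, and the near-flat tail, where Theorem~\ref{thm:bt} applies directly with constant $\eps$.

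Your ``main obstacle'' is also a misreading of restricted invertibility. Theorem~\ref{thm:bt}, as stated, lower-bounds $\sigma_{\min}(M_S)$ for the \emph{unweighted} submatrix $M_S$; the weight matrix $W$ appears only in the hypothesis and in the scale of the bound, not in which matrix gets inverted. Dyadically bucketing $y^*$ by the magnitude of the weights is therefore unnecessary, and it introduces an extraneous $\log(n|\univ|)$ factor you would then have to fight to remove. Your tentative suggestion of a dyadic bucketing of the \emph{eigenvalues} of $AQA^\tra$ is the right instinct---it is exactly what Lemma~\ref{lm:bt-lb} from \cite{apx-disc} packages---but it is needed only on the top-$t$ piece of $h_k$; the tail term does not require it because there the stable rank is already at least $k-t$. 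Getting the $h_k$ dual right is a prerequisite for seeing which of the two restricted-invertibility tools applies where.
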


Theorem~\ref{thm:main-smalldb} is the formal statement of
Theorem~\ref{thm:main}. (Recall again that Algorithm~\ref{alg:proj}
can be implemented in time polynomial in $n$, $|\quer|$ and $|\univ|$, as shown in~\cite{NTZ,conjunctions}.)

To prove the theorem, we optimize over the choices of $\Sigma$ that
ensure $(\eps, \delta)$-differential privacy, and use convex duality
and the restricted invertibility principle to relate the optimal
covariance to the spectral lower bound.

\subsection{Minimizing the Ky Fan Norm}

Recall that for an $m\times m$ matrix $\Sigma \succ 0$ with
eigenvalues $\sigma_1 \geq \ldots \geq \ldots \geq \sigma_m$, and a
positive integer $k \leq m$, the Ky Fan $k$-norm is defined as
$\|\Sigma\|_{(k)} \triangleq \sigma_1 + \ldots + \sigma_k$.  The
covariance matrix $\Sigma$ we use in the projection mechanism will be
the one achieving $\min\{\|\Sigma\|_{(k)}: a_e^\tra \Sigma^{-1}a_e \leq
1~\forall e\in \univ\}$, where $a_e$ is the column of the query matrix
$A$ associated with the universe element $e$. This choice is directly
motivated by Lemma~\ref{lm:proj-err}.  We can write this optimization
problem in the following way.
\begin{align}
  &\text{Minimize } \|X^{-1}\|_{(k)} \label{eq:ellips-obj-small}
  \text{ s.t. }\\
  &X\succ 0\\
  &\forall e \in \univ: a_e^\intercal Xa_e \leq 1.\label{eq:ellips-enclose-small}
\end{align}
The program above has a geometric meaning. For a positive definite
matrix $X$, the set $E(X)\triangleq\{v \in \R^\quer:
v^\tra X v\}$ is an ellipsoid centered at the origin. The constraint
\eqref{eq:ellips-enclose-small} means that $E(X)$ has to contain all
columns of the query matrix $A$. The objective function
\eqref{eq:ellips-obj-small} is equal to the sum of squared lengths of
the $k$ longest major axes of $E(X)$. Therefore, we are looking for
the smallest  ellipsoid centered at the origin that contains the
columns of $A$, where the ``size'' of the ellipsoid is the sum of
squared lengths of the $k$ longest major axes. We will not use this
geometric interpretation in the rest of the paper.

We will show that
\eqref{eq:ellips-obj-small}--\eqref{eq:ellips-enclose-small} is a
convex optimization problem. This will allow us to use general tools
such as the ellipsoid method to find an optimal solution, and also to
use duality theory in order to analyze the value of the optimal
solution.

To show that
\eqref{eq:ellips-obj-small}--\eqref{eq:ellips-enclose-small} is convex
we will need the following well-known result of Fan.

\begin{lemma}[\cite{Fan49}]\label{lm:kyfan}
  For any $m\times m$ real symmetric matrix $\Sigma$, 
  \[
  \|\Sigma\|_{(k)} = \max_{U \in \R^{m\times k}: U^\tra U = I} \tr(U^\tra \Sigma U).
  \]
\end{lemma}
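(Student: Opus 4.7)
The plan is to diagonalize $\Sigma$ and reduce the optimization to a scalar problem that can be solved by a rearrangement argument. Write the spectral decomposition $\Sigma = Q\Lambda Q^\tra$, where $Q$ is orthogonal and $\Lambda = \diag(\lambda_1, \ldots, \lambda_m)$ with eigenvalues arranged in decreasing order, so that the target quantity is $\|\Sigma\|_{(k)} = \lambda_1 + \cdots + \lambda_k$. For any $U \in \R^{m\times k}$ with $U^\tra U = I$, set $V \triangleq Q^\tra U$; then $V^\tra V = I$ as well, and
\[
\tr(U^\tra \Sigma U) = \tr(V^\tra \Lambda V) = \sum_{i=1}^m \lambda_i \alpha_i,
\qquad \alpha_i \triangleq \sum_{j=1}^k V_{ij}^2 = (VV^\tra)_{ii}.
\]
So the question becomes: over what set of vectors $(\alpha_1, \ldots, \alpha_m)$ are we maximizing?

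The key step is to identify that set. Since $V$ has orthonormal columns, $P \triangleq VV^\tra$ is an orthogonal projection of rank $k$, so $\alpha_i = P_{ii}$ with $0 \leq \alpha_i \leq 1$ (any diagonal entry of an orthogonal projection lies in $[0,1]$, as $0 \preceq P \preceq I$) and $\sum_i \alpha_i = \tr(P) = k$. Conversely, given any $0/1$ vector $\alpha$ with exactly $k$ ones supported on some $S \subseteq [m]$, the projection $P$ onto $\mathrm{span}\{e_i : i \in S\}$ realizes it. Thus $\max_U \tr(U^\tra \Sigma U)$ is upper bounded by the linear program $\max\bigl\{\sum_i \lambda_i \alpha_i : 0\le\alpha_i \le 1, \sum_i \alpha_i = k\bigr\}$, whose optimum, by the standard rearrangement/greedy argument, is attained at $\alpha_1 = \cdots = \alpha_k = 1$ and is equal to $\lambda_1 + \cdots + \lambda_k$.

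For the matching lower bound, simply take $U$ to be the matrix whose columns are the top $k$ eigenvectors of $\Sigma$ (the first $k$ columns of $Q$). Then $U^\tra U = I$ and $\tr(U^\tra \Sigma U) = \sum_{i=1}^k \lambda_i = \|\Sigma\|_{(k)}$. Combining the two directions finishes the proof.

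The only mildly delicate step is the claim that diagonal entries of an orthogonal projection lie in $[0,1]$ and sum to the rank; this follows immediately from $P = P^2$ (so $P_{ii} = \sum_j P_{ij}^2 \geq P_{ii}^2$, forcing $P_{ii} \in [0,1]$) together with $\tr(P) = \mathrm{rank}(P)$ for a projection. Everything else is linear algebra and a one-line rearrangement inequality, so I do not anticipate a real obstacle.
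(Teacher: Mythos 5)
The paper does not prove this lemma; it cites it directly from Fan's 1949 paper, so there is no in-paper proof to compare against. Your argument is a correct and essentially standard proof of the Ky Fan maximum principle: reduce to the diagonal case via $V = Q^\tra U$, observe that the relevant quantities are the diagonal entries of the rank-$k$ orthogonal projection $VV^\tra$ (which lie in $[0,1]$ and sum to $k$), relax to the fractional polytope $\{\alpha : 0\le \alpha_i \le 1, \ \sum_i \alpha_i = k\}$, and solve the resulting linear program greedily. Every step is sound, including the greedy step in the presence of possibly negative eigenvalues, since the equality constraint $\sum_i \alpha_i = k$ forces any weight withheld from a larger $\lambda_i$ onto a smaller one. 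One small remark worth being aware of (not a flaw in your proof, since you read $\|\Sigma\|_{(k)}$ the same way the paper uses it): the identity you prove gives the sum of the $k$ \emph{largest eigenvalues}, which agrees with the Ky Fan $k$-norm (sum of top $k$ singular values) only when $\Sigma \succeq 0$ — the setting in which the paper actually applies the lemma. For an indefinite symmetric matrix the two quantities differ, so the word "norm" in the lemma's name is a slight abuse that your proof, like the paper, inherits harmlessly.
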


With this result in hand, we can prove that
\eqref{eq:ellips-obj-small}--\eqref{eq:ellips-enclose-small} is a
convex optimization problem.

\begin{lemma}\label{lm:ellips-program-small}
  The objective function \eqref{eq:ellips-obj-small} and constraints
  \eqref{eq:ellips-enclose-small} are convex over $X \succ 0$.
\end{lemma}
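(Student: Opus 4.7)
The constraints $a_e^\intercal X a_e \leq 1$ are the easy part: for each fixed $e \in \univ$, the map $X \mapsto a_e^\intercal X a_e$ is a linear functional on the space of symmetric matrices, so each constraint defines a half-space, which is convex. The intersection of these half-spaces with the convex cone $\{X \succ 0\}$ is thus convex, so nothing more is needed here.

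The main work is proving that the objective $X \mapsto \|X^{-1}\|_{(k)}$ is convex on $X \succ 0$. My plan is to apply Fan's representation from Lemma~\ref{lm:kyfan} to write
\[
\|X^{-1}\|_{(k)} \;=\; \max_{U \in \R^{|\quer| \times k}:\, U^\tra U = I} \tr(U^\tra X^{-1} U).
\]
Since a pointwise supremum of convex functions is convex, it suffices to check that for every fixed $U$ with orthonormal columns, the map $F_U(X) \triangleq \tr(U^\tra X^{-1} U) = \sum_{i=1}^k u_i^\tra X^{-1} u_i$ is convex in $X \succ 0$, where $u_1, \ldots, u_k$ are the columns of $U$. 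By linearity it is enough to prove that $f_u(X) \triangleq u^\tra X^{-1} u$ is convex for each fixed $u$.

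The step I expect to be the main (but entirely standard) technical point is the convexity of the matrix fractional function $f_u$. I would establish this via a Schur-complement characterization of its epigraph: for $X \succ 0$ and $t \in \R$,
\[
u^\tra X^{-1} u \leq t \quad \iff \quad \begin{pmatrix} X & u \\ u^\tra & t \end{pmatrix} \succeq 0,
\]
and the right-hand side is a linear matrix inequality in $(X,t)$, hence defines a convex set. Thus $\{(X,t) : X \succ 0,\ f_u(X) \leq t\}$ is convex, so $f_u$ is convex on $X \succ 0$. (Equivalently, one can invoke operator convexity of $X \mapsto X^{-1}$ on the positive definite cone and sandwich between $u$'s.)

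Putting the pieces together: $F_U$ is a sum of convex functions and hence convex, and $\|X^{-1}\|_{(k)}$ is the supremum of the family $\{F_U\}_U$ over all $U$ with $U^\tra U = I$, and therefore convex on $X \succ 0$. Combined with the convexity of the feasible region, this shows that \eqref{eq:ellips-obj-small}--\eqref{eq:ellips-enclose-small} is indeed a convex optimization problem, completing the proof.
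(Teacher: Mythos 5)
Your proof is correct and rests on the same key lemma (Fan's variational characterization, Lemma~\ref{lm:kyfan}), but the organization differs a bit from the paper's. The paper argues directly: take $Y = \alpha X_1 + (1-\alpha)X_2$, invoke operator convexity of $X \mapsto X^{-1}$ to get $Y^{-1} \preceq \alpha X_1^{-1} + (1-\alpha)X_2^{-1}$, pick the $U$ attaining the Ky Fan maximum for $Y^{-1}$, and chain the inequalities through Lemma~\ref{lm:kyfan} again. You instead use Fan's lemma up front to express $\|X^{-1}\|_{(k)}$ as a pointwise supremum over $U$ of the functions $\tr(U^\tra X^{-1}U)$, then reduce to the convexity of the matrix fractional function $u^\tra X^{-1} u$, which you verify via the Schur-complement LMI characterization of its epigraph. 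The two are equivalent in substance — your Schur-complement step is, in essence, exactly the statement of operator convexity of $X^{-1}$, since operator convexity is precisely the claim that $u^\tra X^{-1}u$ is convex for every fixed $u$ — but your decomposition is more modular, and the Schur-complement route has the advantage of not invoking operator convexity of the inverse as a black box. Both are fine; the paper's is slightly shorter, yours is slightly more self-contained.
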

\begin{proof}
  The objective function and the constraints
  \eqref{eq:ellips-enclose-small} are affine, and therefore convex. It
  remains to show that the objective \eqref{eq:ellips-obj-small} is
  also convex. Let $X_1$ and $X_2$ be two feasible solutions and
  define $Y = \alpha X_1 + (1-\alpha)X_2$ for some $\alpha \in [0,
  1]$. Because the matrix inverse is operator convex (see
  e.g.~\cite{Bhatia-MA}), $Y^{-1} \preceq \alpha X_1^{-1} + (1-\alpha)
  X_2^{-1}$. Let $U\in \R^{m\times k}$ be such that $\tr(U^\tra Y^{-1} U) =
  \|Y^{-1}\|_{(k)}$ and $U^\tra U = I$. Such a $U$ exists by by
  Lemma~\ref{lm:kyfan}. We have, again using Lemma~\ref{lm:kyfan},
  \begin{align*}
  \|Y^{-1}\|_{(k)} = \tr(U^\tra Y^{-1} U) &\leq \alpha \tr(U^\tra
  X_1^{-1}U) +  (1-\alpha) \tr(U^\tra X_2^{-1}U)\\
  &\leq  \alpha\|X_1^{-1}\|_{(k)} + (1-\alpha)\|X_2^{-1}\|_{(k)}. 
  \end{align*}
  This finishes the proof. 
\end{proof}

Since the program
\eqref{eq:ellips-obj-small}--\eqref{eq:ellips-enclose-small} is
convex, its optimal solution can be approximated in polynomial time within
any given degree of accuracy using the ellipsoid
algorithm~\cite{GLS-ellipsoid}.

\subsection{A Special Function}

Before we continue, we need to introduce a somewhat complicated
function of the singular values of a matrix. This function will turn
out to be the objective funciton in a maximization problem which is
dual to \eqref{eq:ellips-obj-small}--\eqref{eq:ellips-enclose-small}.
The next lemma is needed to argue that this function is
well-defined. The lemma was proved in~\cite{simplex}.

\begin{lemma}[\cite{simplex}]\label{lm:singvals-thresh}
  Let $\sigma_1 \geq \ldots \sigma_m \geq 0$ be non-negative reals,
  and let $k \leq m$ be a positive integer. There exists a unique integer
  $t$, $0 \leq t \leq k-1$, such that
  \begin{equation}\label{eq:singvals-thresh}
  \sigma_t > \frac{\sum_{i > t}{\sigma_i}}{k - t} \geq \sigma_{t+1},
  \end{equation}
  with the convention $\sigma_0 = \infty$. 
\end{lemma}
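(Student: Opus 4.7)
The plan is to define the auxiliary quantity $f(t) \triangleq \frac{\sum_{i > t} \sigma_i}{k - t}$ for $0 \leq t \leq k-1$, so the claim becomes: there is a unique $t \in \{0,\ldots,k-1\}$ with $\sigma_t > f(t) \geq \sigma_{t+1}$. I will first establish existence by a minimality argument, and then derive a monotonicity property of $f$ that forces uniqueness.

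The key tool is the recursive identity $(k-t)f(t) = \sigma_{t+1} + (k-t-1)f(t+1)$, valid for $0 \leq t \leq k-2$ and obtained by peeling off the $\sigma_{t+1}$ term. Rearranging gives $f(t+1) - f(t) = \tfrac{f(t) - \sigma_{t+1}}{k-t-1}$, which yields the crucial dichotomy: $\sigma_{t+1} \leq f(t) \iff f(t) \leq f(t+1)$, with the strict versions also equivalent. Similarly, the identity rewritten as $f(t) - \sigma_{t+1} = (k-t-1)(f(t+1) - f(t))/1$ combined with $(k-t)(\sigma_t - f(t)) = (k-t)\sigma_t - \sigma_{t+1} - (k-t-1)f(t+1)$ allows converting statements about $\sigma_t$ vs.\ $f(t-1)$ into statements about $\sigma_t$ vs.\ $f(t)$.

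For existence, let $t^\star = \min\{t \in \{0,\ldots,k-1\} : \sigma_{t+1} \leq f(t)\}$. This set is nonempty because at $t = k-1$ we have $f(k-1) = \sum_{i=k}^m \sigma_i \geq \sigma_k$. If $t^\star = 0$, the condition $\sigma_{t^\star} > f(t^\star)$ holds vacuously by the convention $\sigma_0 = \infty$. If $t^\star \geq 1$, minimality gives $\sigma_{t^\star} > f(t^\star - 1)$; applying the recursive identity at $t = t^\star - 1$ one checks that $(k-t^\star+1)f(t^\star-1) = \sigma_{t^\star} + (k-t^\star)f(t^\star)$ combined with $f(t^\star-1) < \sigma_{t^\star}$ forces $f(t^\star) < \sigma_{t^\star}$. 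In either case $t^\star$ witnesses the claim.

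For uniqueness, the dichotomy shows that once $\sigma_{t+1} \leq f(t)$ holds at some $t$, then $f(t) \leq f(t+1)$, and combined with the trivial $\sigma_{t+2} \leq \sigma_{t+1}$ one gets $\sigma_{t+2} \leq f(t+1)$. By induction, for every $t' > t^\star$ one has $\sigma_{t'} \leq \sigma_{t^\star+1} \leq f(t^\star) \leq f(t')$, so the condition $\sigma_{t'} > f(t')$ fails at every $t' > t^\star$. Hence $t^\star$ is the only index satisfying both inequalities.

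The main obstacle is book-keeping around the boundary case $t = k-1$, where the factor $k-t-1$ in the recursion vanishes; handling this case separately as the base of the existence argument (and noting that the induction in the uniqueness step stops there) avoids the issue. Everything else is elementary manipulation of the identity above.
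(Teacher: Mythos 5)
The paper does not give its own proof of this lemma: it defers to the cited reference \cite{simplex}, so there is no in-paper argument to compare yours against. On its own merits, your proof is correct. The recursive identity $(k-t)f(t) = \sigma_{t+1} + (k-t-1)f(t+1)$ is right, the derived sign-equivalence $\sigma_{t+1} \leq f(t) \iff f(t) \leq f(t+1)$ (valid for $t \leq k-2$) is right, the non-emptiness check at $t=k-1$ uses $f(k-1)=\sum_{i\geq k}\sigma_i \geq \sigma_k$ correctly, and the step deducing $f(t^\star) < \sigma_{t^\star}$ from $f(t^\star-1) < \sigma_{t^\star}$ via the identity at $t^\star-1$ is a clean calculation. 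In the uniqueness paragraph you only state explicitly that the left inequality $\sigma_{t'} > f(t')$ fails for $t' > t^\star$; you should also say the one-line observation that for $t' < t^\star$ the right inequality $f(t') \geq \sigma_{t'+1}$ fails by minimality of $t^\star$ — this is implicit in your definition of $t^\star$ but is what actually closes the uniqueness claim. That is a presentational gap rather than a mathematical one. Overall this is a correct, self-contained, elementary proof of a lemma the paper only cites.
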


We are now ready to define the function:
\begin{definition}
  Let $\Sigma \succeq 0$ be an $m\times m$ positive semidefinite
  matrix with singular values $\sigma_1\geq \ldots \geq \sigma_m$, and
  let $k \leq m$ be a positive integer. The function $h_k(\Sigma)$ is
  defined as
  \[
  h_k(\Sigma) \triangleq \sum_{i = 1}^t{\sigma_i^{1/2}} +  \sqrt{k-t}\left(\sum_{i > t}{\sigma_i}\right)^{1/2},
  \]
  where $t$ is the unique integer such that 
  $
  \sigma_t > \frac{\sum_{i > t}{\sigma_i}}{k - t} \geq \sigma_{t+1}.
  $
\end{definition}
Lemma~\ref{lm:singvals-thresh} guarantees that $h_k(\Sigma)$ is a
well-defined real-valued function. In the next lemma we also show that
it is continuous.

\begin{lemma}\label{lm:hk-contin}
  The function $h_k$ is continuous over positive semidefinite
  matrices with respect to the operator norm.  
\end{lemma}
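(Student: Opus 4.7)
The plan is to reduce continuity of $h_k$ to continuity of finitely many auxiliary functions, and then to handle the integer-valued threshold $t$ via an algebraic collapse identity at the jump points. For each $s \in \{0, 1, \ldots, k-1\}$, define
\[
f_s(\Sigma) \triangleq \sum_{i=1}^s \sigma_i^{1/2} + \sqrt{k-s}\Bigl(\sum_{i>s}\sigma_i\Bigr)^{1/2},
\]
where $\sigma_i$ denotes the $i$-th largest eigenvalue of $\Sigma$. Since the maps $\Sigma \mapsto \sigma_i(\Sigma)$ are $1$-Lipschitz in the operator norm by Weyl's inequality, each $f_s$ is continuous on the positive semidefinite cone. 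Writing $t(\Sigma)$ for the integer produced by Lemma~\ref{lm:singvals-thresh}, we have $h_k(\Sigma) = f_{t(\Sigma)}(\Sigma)$, and the only obstruction to continuity is that $t(\Sigma)$ is integer-valued and may jump.

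The key algebraic fact I would prove by direct computation is the following collapse identity. Set $r_s(\Sigma) \triangleq (\sum_{i>s}\sigma_i)/(k-s)$. If $\sigma_s(\Sigma) = r_s(\Sigma) = c$, then $\sum_{i\geq s}\sigma_i = c(k-s+1)$, so both $f_{s-1}(\Sigma)$ and $f_s(\Sigma)$ simplify to $\sum_{i < s}\sigma_i^{1/2} + (k-s+1)\sqrt{c}$; moreover, a one-line computation shows $r_{s-1}(\Sigma) = c = r_s(\Sigma)$. Thus whenever the strict inequality $\sigma_s > r_s$ in the definition of $t$ collapses to equality, both the value of $f_s$ and the value of $r_s$ propagate down to index $s-1$.

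With this in hand, the continuity argument proceeds by a subsequence plus case analysis. Fix $\Sigma$ and let $\Sigma_n \to \Sigma$. Since $t(\Sigma_n)$ takes values in the finite set $\{0, \ldots, k-1\}$, it suffices by the standard Urysohn subsequence criterion to show that along any subsequence on which $t(\Sigma_n) = t$ is constant, $h_k(\Sigma_n) \to h_k(\Sigma)$. Continuity of $f_t$ reduces this to showing $f_t(\Sigma) = h_k(\Sigma)$. Passing to the limit in the defining inequalities $\sigma_t(\Sigma_n) > r_t(\Sigma_n) \geq \sigma_{t+1}(\Sigma_n)$ yields $\sigma_t(\Sigma) \geq r_t(\Sigma) \geq \sigma_{t+1}(\Sigma)$. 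If $\sigma_t(\Sigma) > r_t(\Sigma)$, the defining conditions for $t$ still hold at $\Sigma$, so $t = t(\Sigma)$ by uniqueness and we are done. Otherwise $\sigma_t(\Sigma) = r_t(\Sigma)$; the collapse identity gives $f_{t-1}(\Sigma) = f_t(\Sigma)$ and $r_{t-1}(\Sigma) = r_t(\Sigma) = \sigma_t(\Sigma)$, so I repeat the analysis with $t$ replaced by $t-1$. Each iteration strictly decreases the working index and must terminate (at worst at $0$, using the convention $\sigma_0 = \infty$) with the genuine threshold $t(\Sigma)$, giving $f_t(\Sigma) = f_{t(\Sigma)}(\Sigma) = h_k(\Sigma)$.

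The main technical point to get right is exactly this iteration: the preservation $r_{s-1} = r_s$ along the chain of equalities is what ensures that the next equality $\sigma_{s-1} = r_{s-1}$ has a chance to hold and that the process eventually reaches the true $t(\Sigma)$. Once the collapse identity is established, the rest of the proof is a routine continuity argument.
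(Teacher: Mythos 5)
Your proof is correct and hinges on the same algebraic collapse identity as the paper: your observation that $\sigma_s(\Sigma) = r_s(\Sigma)$ forces $f_{s-1}(\Sigma) = f_s(\Sigma)$ and $r_{s-1}(\Sigma) = r_s(\Sigma)$ is exactly the paper's identity $\sum_{i>s}\sigma_i = (k-s)\sigma_{t+1}$ along the plateau $\sigma_{t+1}=\cdots=\sigma_{t'}$, just re-indexed from the other boundary of the defining double inequality. The only difference is organizational — you run a subsequence argument and iterate the collapse downward from the constant subsequential value of $t(\Sigma_n)$ until reaching the true $t(\Sigma)$, whereas the paper fixes $\Sigma$ and argues that the threshold index of every nearby $\Sigma'$ falls in the interval $[t,t']$ on which $f_s(\Sigma)$ is constant — so this is a stylistic rather than a substantive departure.
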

\begin{proof}
  Let $\Sigma$ be a $m\times m$ positive semidefinite matrix with singular
  values $\sigma_1 \geq \ldots \geq \sigma_m$ and let $t$, $0\leq t < k$, be the
  unique integer for which $\sigma_t > \frac{\sum_{i > t}{\sigma_i}}{k
    - t} \geq \sigma_{t+1}$. If $\frac{\sum_{i > t}{\sigma_i}}{k - t}
  > \sigma_{t+1}$, then setting $\delta$ small enough ensures that, for
  any $\Sigma'$ such that $\|\Sigma - \Sigma'\|_2 < \delta$, $h_k(\Sigma)$ and $h_k(\Sigma')$ are
  computed with the same value of $t$. In this case, the proof of
  continuity follows from the continuity of the square root
  function. Let us therefore assume that $\frac{\sum_{i > t}{\sigma_i}}{k - t}
  = \sigma_{t+1} = \ldots = \sigma_{t'}>\sigma_{t'+1}$ for some $t'
  \geq t+1$. Then  for any integer $s \in [t, t']$,
  \[
  \sum_{i > s}{\sigma_i} = \sum_{i>t}{\sigma_i} - (s-t)\sigma_{t+1}
  = (k-s)\sigma_{t+1}.
  \]
  We then have
  \begin{align}\label{eq:hk-irrev-t}
    \sum_{i = 1}^t{\sigma_i^{1/2}} +  \sqrt{k-t}\left(\sum_{i >
        t}{\sigma_i}\right)^{1/2} &= 
    \sum_{i = 1}^t{\sigma_i^{1/2}} +  (k-t)\sigma^{1/2}_{t+1}\notag\\
    &= \sum_{i = 1}^s{\sigma_i^{1/2}} +
    (k-s)\sigma^{1/2}_{t+1}\notag\\
    &= \sum_{i = 1}^s{\sigma_i^{1/2}} +  \sqrt{k-s}\left(\sum_{i > s}{\sigma_i}\right)^{1/2}.
  \end{align}
  For any $\Sigma'$ such that $\|\Sigma' - \Sigma\|_2 < \delta$ for a small enough
  $\delta$, we have
  \begin{equation*}
  h_k(\Sigma') =  \sum_{i = 1}^s{\sigma_i(\Sigma')^{1/2}} +  \sqrt{k-s}\left(\sum_{i >
     s}{\sigma_i(\Sigma')}\right)^{1/2},
  \end{equation*}
for some integer $s$  in $[t,t']$. Continuity then follows from
  \eqref{eq:hk-irrev-t}, and the
  continuity of the square root function. 
\end{proof}

\subsection{The Dual of the Ky Fan Norm Minimization Problem}

Our next goal is derive a dual characterization of
\eqref{eq:ellips-obj-small}--\eqref{eq:ellips-enclose-small}, which we
will then relate to the spectral lower bound $\specLB(k, A)$. It is
useful to work with the dual, because it is a maximization problem, so
to prove optimality we just need to show that any feasible solution of
the dual gives a lower bound on the optimal error under differential
privacy.

The next theorem gives our dual characterization in terms of the
special function $h_k$ defined in the previous section.

\begin{theorem}\label{thm:nuclear-small}
  Let $A = (a_e)_{e \in \univ}\in \R^{\quer\times \univ}$ be a rank
  $|\quer|$ matrix, and let $\mu$ be the optimal value of
  \eqref{eq:ellips-obj-small}--\eqref{eq:ellips-enclose-small}. Then,
  \begin{align}
    \mu^2 =   &\max h_k(AQA^\tra)^2\label{eq:nuclear-obj2-small}
    \text{ s.t.}\\
    &Q \succeq 0, \text{ diagonal},\tr(Q) = 1\label{eq:nuclear-pos2-small}
  \end{align}
\end{theorem}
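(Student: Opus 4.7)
The plan is to derive the dual formula through two successive applications of convex duality. First, Fan's lemma (Lemma~\ref{lm:kyfan}) rewrites the Ky Fan norm as a support function: $\|X^{-1}\|_{(k)} = \max_{P \in \mathcal{P}_k} \tr(PX^{-1})$, where $\mathcal{P}_k \triangleq \{P \succeq 0 : P \preceq I,\ \tr(P) = k\}$ is the convex hull of the rank-$k$ orthogonal projections. This presents $\mu$ as a saddle problem $\min_X \max_P \tr(PX^{-1})$ subject to $a_e^\tra X a_e \leq 1$. Sion's minimax theorem then swaps the order: the integrand is operator-convex (hence convex) in $X \succ 0$ for any $P \succeq 0$, linear in $P$, and $\mathcal{P}_k$ is compact. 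The rank-$|\quer|$ hypothesis on $A$ guarantees the $X$-feasible set is bounded, so the usual technicalities cause no trouble. For fixed $P$, the inner convex SDP in $X$ has strict interior (take $X = \varepsilon I$), so Slater's condition holds and we apply Lagrangian duality with a nonnegative multiplier $q_e$ per constraint, gathered into $Q = \diag(q_e)$.

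The Lagrangian is $L(X,Q) = \tr(PX^{-1}) + \tr(X \cdot AQA^\tra) - \tr(Q)$. Setting $B \triangleq AQA^\tra$, and using that $B \succ 0$ when all $q_e > 0$ (which follows from $\rank(A) = |\quer|$), the substitution $X = B^{-1/2} Z B^{-1/2}$ reduces the inner minimization to $\inf_{Z \succ 0}\bigl[\tr(\tilde P Z^{-1}) + \tr(Z)\bigr]$ with $\tilde P \triangleq B^{1/2} P B^{1/2}$. First-order optimality gives $Z^\star = \tilde P^{1/2}$ with value $2\tr(\tilde P^{1/2})$. Collecting yields
\[
  \mu \;=\; \max_{P \in \mathcal{P}_k}\; \max_{Q \succeq 0,\ \text{diag.}} \Bigl[\, 2\tr\bigl((B^{1/2} P B^{1/2})^{1/2}\bigr) - \tr(Q) \,\Bigr].
\]

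The main obstacle is the inner $\max_P$. Fix $Q$ and diagonalize $B = V \Lambda V^\tra$ with $\lambda_1 \geq \cdots \geq \lambda_m$; set $\tilde P \triangleq V^\tra P V \in \mathcal{P}_k$. The matrix $M \triangleq \Lambda^{1/2} \tilde P \Lambda^{1/2}$ has diagonal entries $M_{ii} = \lambda_i \tilde P_{ii}$, with $\tilde P_{ii} \in [0,1]$ and $\sum_i \tilde P_{ii} = k$. The Schur--Horn theorem gives $\diag(M) \prec \lambda(M)$ as majorization, and Schur-concavity of $z \mapsto \sum_i \sqrt{z_i}$ then yields
\[
  \tr(M^{1/2}) \;\leq\; \sum_i \sqrt{M_{ii}} \;=\; \sum_i \sqrt{\lambda_i \tilde P_{ii}}.
\]
The scalar maximization of the right-hand side over $\tilde P_{ii} \in [0,1]$ summing to $k$ is the classical water-filling problem: set $\tilde P_{ii} = \min(1, c\lambda_i)$ with $c$ chosen so $\sum \tilde P_{ii} = k$. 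The cutoff between capped and uncapped coordinates is exactly the integer $t$ of Lemma~\ref{lm:singvals-thresh}, and substituting the optimal $\tilde P_{ii}$ yields value precisely $h_k(B)$. Equality in Schur--Horn holds when $\tilde P$ is diagonal, so the upper bound is attained.

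With $\max_P 2\tr((B^{1/2} P B^{1/2})^{1/2}) = 2 h_k(AQA^\tra)$, we reach $\mu = \max_Q [\,2 h_k(AQA^\tra) - \tr(Q)\,]$. Homogenize via $Q = sQ'$ with $\tr(Q') = 1$, $s \geq 0$: using the scaling $h_k(sM) = \sqrt{s}\, h_k(M)$, the objective becomes $2\sqrt{s}\, h_k(AQ'A^\tra) - s$, optimized at $s^\star = h_k(AQ'A^\tra)^2$ with value $h_k(AQ'A^\tra)^2$. The outer maximum over normalized $Q'$ then yields the claimed dual characterization. Rank-deficient $B$ at the boundary of the $Q$-domain is handled by continuity: replace $Q$ by $Q + \varepsilon I$, which is strictly positive diagonal, and pass to the limit using Lemma~\ref{lm:hk-contin}.
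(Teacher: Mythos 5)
Your proof is correct and takes a genuinely different route from the paper's. Both arguments begin from strong duality and must evaluate $\inf_{X \succ 0}\bigl[\|X^{-1}\|_{(k)} + \tr(APA^\tra X)\bigr]$, but they compute this infimum in completely different ways. The paper uses the subdifferential calculus of the Ky Fan norm: Overton--Womersley's characterization of $\partial\|X^{-1}\|_{(k)}$ (Lemma~\ref{lm:kyfan-subgr}) together with the explicit construction in Lemma~\ref{lm:subgr-soln} produce, for each admissible $P$, a matrix $X$ with $APA^\tra \in -\partial g_k(X)$ and $\|X^{-1}\|_{(k)} = h_k(APA^\tra)$; this is the technical core of the argument. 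You avoid subgradients entirely: you represent $\|X^{-1}\|_{(k)}$ as the support function over the Fantope $\{P : 0 \preceq P \preceq I,\ \tr(P)=k\}$ (a convexified form of Lemma~\ref{lm:kyfan}), swap $\min$ and $\max$ via Sion's theorem, solve the inner $X$-problem in closed form through the substitution $X = B^{-1/2}ZB^{-1/2}$ with $B = AQA^\tra$, and then identify $\max_P 2\tr\bigl((B^{1/2}PB^{1/2})^{1/2}\bigr)$ with $2h_k(B)$ by Schur--Horn majorization and water-filling. The trade is Sion's minimax plus one extra max-max swap against the Overton--Womersley subgradient formula; what your route buys is a transparent variational explanation of where $h_k$ comes from (a constrained water-filling value), instead of a verification by explicit construction. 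Two small remarks: (i) for the rank-deficient case, ``replace $Q$ by $Q + \varepsilon I$'' perturbs $\tr(Q)$ and needs a renormalization, or, as the paper does, approach along $\lambda Q + (1-\lambda)I$ with $\lambda \uparrow 1$; and (ii) your derivation correctly yields $\mu = \max_Q h_k(AQA^\tra)^2$ (strong duality equates the primal optimum $\mu$, not $\mu^2$, with the dual optimum) --- the square on $\mu$ in the theorem's display is evidently a typo, and you should not hesitate to say so rather than deferring to the printed statement.
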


Since the objective of
\eqref{eq:ellips-obj-small}--\eqref{eq:ellips-enclose-small} is not
necessarily differentiable, in order to analyze the dual and prove
Theorem~\ref{thm:nuclear-small}, we need to recall the concepts of
subgradients and subdifferentials. A \emph{subgradient} of a convex
function $f\maps S \to \R$ at $x \in S$, where $S$ is some open subset
of $\R^d$, is a vector $y \in \R^d$ so that for every $z \in S$ we
have
\[
f(z) \geq f(x) + \langle z-x, y\rangle.
\]
The set of subgradients of $f$ at $x$ is denoted $\partial{f(x)}$ and
is known as the \emph{subdifferential}. When $f$ is differentiable at
$x$, the subdifferential is a singleton set containing only the
gradient $\nabla f(x)$. If $f$ is defined by $f(x) = f_1(x) + f_2(x)$,
where $f_1, f_2\maps S \to \R$ , then $\partial f(x) = \partial f_1(x)
+ \partial f_2(x)$. A basic fact in convex analysis is that $f$
achieves its minimum at  $x$ if and only if $0 \in \partial f(x)$. For
more information on subgradients and subdifferentials, see the
classical text of Rockafellar~\cite{Rockafellar}. 

Overton and Womersley~\cite{OvertonW93-kyfan} analyzed the
subgradients of functions which are a composition of a differentiable
matrix-valued function with a Ky Fan norm. The special case we need
also follows from the results of Lewis~\cite{Lewis95-matfunc}.
\begin{lemma}[\cite{OvertonW93-kyfan},\cite{Lewis95-matfunc}]\label{lm:kyfan-subgr}
  Let $g_k(X) \triangleq \|X^{-1}\|_{(k)}$ for a positive definite
  matrix $X \in \R^{m\times m}$. Let $\sigma_1 \ge \ldots \geq \sigma_m$ be the
  singular values of $X^{-1}$ and let $D$ be the diagonal matrix
  with the $\sigma_i$ on the diagonal. Assume that for some $r \geq
  k$, $\sigma_k = \ldots = \sigma_r$. Then the subgradients of $g_k$ are
  given by
  \[
  \partial g_k(X) = \conv\{-U_SU_S^\tra X^{-2} U_SU_S^\tra: U
  \text{orthonormal}, UD U^\tra = X^{-1}, S \subseteq [r]\},
  \]
  where $U_S$ is the submatrix of $U$ indexed by $S$. 
\end{lemma}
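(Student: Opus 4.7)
My approach is Lagrangian duality on the convex primal \eqref{eq:ellips-obj-small}--\eqref{eq:ellips-enclose-small}. Introduce nonnegative multipliers $q_e$ for each constraint $a_e^\tra X a_e \leq 1$, collect them in the diagonal matrix $Q = \diag(q_e)$, and form the Lagrangian
\[
L(X,q) = \|X^{-1}\|_{(k)} + \tr(XAQA^\tra) - \tr(Q),
\]
so the dual function is $g(q) = \inf_{X \succ 0} L(X,q)$. The primal is strictly feasible (take $X = \eps I$ with $\eps$ small enough), so Slater's condition holds and strong duality gives $\mu = \sup_{q \geq 0} g(q)$. The crux of the proof is to show $g(q) = 2 h_k(AQA^\tra) - \tr(Q)$; a rescaling step then yields the claimed formula.

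\textbf{Inner optimization.} Setting $M = AQA^\tra$, I would compute $\phi(M) \triangleq \inf_{X \succ 0}[\|X^{-1}\|_{(k)} + \tr(XM)]$ by using the ``support function'' variational form of the Ky Fan norm,
\[
\|Y\|_{(k)} = \max_{P \in \mathcal{P}} \tr(PY), \qquad \mathcal{P} = \{P : 0 \preceq P \preceq I,\ \tr(P) = k\},
\]
which is dual to Lemma~\ref{lm:kyfan}. The integrand $\tr(PX^{-1}) + \tr(XM)$ is convex in $X$ and linear in $P$, so Sion's minimax lets me swap the inf and max. For each fixed $P$, the inner problem is a classical matrix trace optimization: the stationarity equation $X^{-1}PX^{-1} = M$ gives $X^* = M^{-1/2}(M^{1/2}PM^{1/2})^{1/2}M^{-1/2}$ with value $2\tr((M^{1/2}PM^{1/2})^{1/2})$ (the rank-deficient case of $M$ follows by continuity). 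Thus $\phi(M) = 2 \max_{P \in \mathcal{P}}\tr((M^{1/2}PM^{1/2})^{1/2})$.

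\textbf{Matching the inner max to $h_k$.} Next I prove $\max_{P \in \mathcal{P}}\tr((M^{1/2}PM^{1/2})^{1/2}) = h_k(M)$. By unitary invariance, diagonalize $M = \diag(\lambda_1,\ldots,\lambda_m)$ with $\lambda_1 \geq \cdots \geq \lambda_m \geq 0$. The Schur--Horn theorem together with concavity of $\sqrt{\cdot}$ yields $\tr(A^{1/2}) \leq \sum_i \sqrt{A_{ii}}$ for any PSD $A$, with equality when $A$ is diagonal. Applying this to $A = M^{1/2}PM^{1/2}$ (whose diagonal is $(\lambda_i P_{ii})$) and noting that replacing $P$ by $\diag(P_{11},\ldots,P_{mm})$ stays in $\mathcal{P}$ and attains equality, the optimization reduces to
\[
\max\Bigl\{\textstyle\sum_i \sqrt{\lambda_i p_i}\,:\, p_i \in [0,1],\ \sum_i p_i = k \Bigr\}.
\]
A KKT analysis produces a water-filling solution: $p_i = 1$ for the top $t$ indices and $p_i \propto \lambda_i$ for $i > t$, where the threshold $t$ satisfies $\lambda_t > (\sum_{j > t}\lambda_j)/(k-t) \geq \lambda_{t+1}$ -- exactly the characterization of Lemma~\ref{lm:singvals-thresh}. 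Plugging this $p^*$ into the objective gives $\sum_{i \leq t}\sqrt{\lambda_i} + \sqrt{(k-t)\sum_{j > t}\lambda_j} = h_k(M)$.

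\textbf{Rescaling and obstacle.} Combining yields $g(q) = 2 h_k(AQA^\tra) - \tr(Q)$. Using the homogeneity $h_k(cM) = \sqrt{c}\, h_k(M)$, write $Q = cQ'$ with $\tr(Q') = 1$; then $\sup_{c \geq 0}[2\sqrt{c}\, h_k(AQ'A^\tra) - c] = h_k(AQ'A^\tra)^2$, attained at $c = h_k(AQ'A^\tra)^2$. Maximizing over normalized $Q'$ gives the claimed identity. The main obstacle is the inner max: the Schur--Horn reduction to diagonal $P$ and the KKT water-filling must reproduce the threshold $t$ of Lemma~\ref{lm:singvals-thresh} exactly, with care needed in the degenerate case $\lambda_t = \lambda_{t+1}$ (handled via Lemma~\ref{lm:hk-contin}). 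A secondary technicality is invoking Sion's minimax over the non-compact cone $\{X \succ 0\}$, which one handles by first restricting to a large compact sublevel set containing the infimum.
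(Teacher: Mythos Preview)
Your proposal does not prove the stated lemma. Lemma~\ref{lm:kyfan-subgr} is a characterization of the subdifferential $\partial g_k(X)$ of $g_k(X)=\|X^{-1}\|_{(k)}$ at a fixed positive definite $X$: you are asked to show that $\partial g_k(X)$ equals a specific convex hull of matrices of the form $-U_SU_S^\tra X^{-2}U_SU_S^\tra$. Nowhere in your write-up do you compute, or even mention, a subdifferential. What you have actually written is a proof of Theorem~\ref{thm:nuclear-small}, the duality identity $\mu=\max_{Q}h_k(AQA^\tra)$.

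The paper does not prove Lemma~\ref{lm:kyfan-subgr} at all; it is quoted from Overton--Womersley and Lewis and then \emph{used}, together with Lemma~\ref{lm:subgr-soln}, to evaluate the Lagrange dual function in the proof of Theorem~\ref{thm:nuclear-small}. Your argument is in effect an alternative route to that theorem which bypasses the subgradient lemma entirely: instead of invoking the first-order condition $APA^\tra\in-\partial g_k(X)$ and constructing a witness $X$ as in Lemma~\ref{lm:subgr-soln}, you introduce the variational form $\|Y\|_{(k)}=\max\{\tr(PY):0\preceq P\preceq I,\ \tr(P)=k\}$, swap $\inf_X$ and $\max_P$ by Sion, solve the inner problem in closed form, and recover $h_k$ through a water-filling KKT computation. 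Modulo the two technicalities you already flag (compactness for Sion over $\{X\succ 0\}$, and the singular-$M$ case), this is a correct and more self-contained proof of Theorem~\ref{thm:nuclear-small}. But it is a proof of a different statement from the one you were given.

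If you are after Lemma~\ref{lm:kyfan-subgr} itself, the standard argument is a chain rule. The map $X\mapsto X^{-1}$ is smooth on $\{X\succ 0\}$ with differential $H\mapsto -X^{-1}HX^{-1}$, and the subdifferential of $Y\mapsto\|Y\|_{(k)}$ at a symmetric $Y$ with eigendecomposition $Y=UDU^\tra$ and $\sigma_k=\cdots=\sigma_r$ is $\conv\{U_SU_S^\tra:|S|=k,\ S\subseteq[r]\}$ (this is exactly the dual description implicit in Lemma~\ref{lm:kyfan}). Pulling the latter back through the adjoint of the former, which is again $G\mapsto -X^{-1}GX^{-1}$, and using that the columns of $U$ are eigenvectors of $X^{-1}$, yields precisely the matrices $-U_SU_S^\tra X^{-2}U_SU_S^\tra$ in the claimed formula.
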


We use the following well-known characterization of the convex
hull of boolean vectors of Hamming weight $k$. For a proof,
see~\cite{schrijver-combop-B}. 

\begin{lemma}\label{lm:unik-poly}
  Let $V_{k,n} \triangleq \conv\{v \in \{0, 1\}^n: \|v\|_1 = k\}$. Then
  $V_{k,n} = \{v: \|v\|_1 = k, 0\leq v_i \leq 1~\forall i\}$. 
\end{lemma}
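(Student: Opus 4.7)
The plan is to prove the two inclusions separately. The containment $V_{k,n} \subseteq \{v : \|v\|_1 = k,\ 0 \le v_i \le 1\ \forall i\}$ is immediate: each generator $v \in \{0,1\}^n$ with $\|v\|_1 = k$ satisfies the constraints on the right (note that the right-hand side really encodes $\sum_i v_i = k$, since the coordinates are nonnegative), and the right-hand side is convex, so it contains the convex hull of the generators.

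For the reverse inclusion I would use a vertex argument. Call the right-hand side polytope $P$. It is nonempty (e.g.\ $v = (k/n,\ldots,k/n) \in P$) and bounded, so $P = \conv(\text{vertices of } P)$; hence it suffices to prove that every vertex of $P$ is a $0/1$ vector with exactly $k$ ones. A vertex is determined by $n$ linearly independent tight constraints from the description of $P$. Exactly one of these can be the equality $\sum_i v_i = k$, so at least $n-1$ of the bound constraints $v_i \ge 0$ or $v_i \le 1$ must be tight, on $n-1$ distinct coordinates. This forces $n-1$ coordinates of $v$ to lie in $\{0,1\}$; the remaining coordinate is then determined by $\sum_i v_i = k \in \Z$ and so is itself an integer in $[0,1]$, hence $0$ or $1$. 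Consequently $v \in \{0,1\}^n$ and $\|v\|_1 = k$.

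The only mild subtlety is verifying the linear-independence of the tight constraints: a set of $n-1$ distinct coordinate-bound constraints (each of the form $e_i^\tra v = 0$ or $e_i^\tra v = 1$) together with $\mathbf{1}^\tra v = k$ always gives $n$ linearly independent equations, so the vertex description does yield exactly the $0/1$ points with $k$ ones. If one prefers to avoid polyhedral theory, I would substitute the following constructive argument: given $v \in P$, let $F = \{i : 0 < v_i < 1\}$; since $\sum_i v_i = k \in \Z$, $|F| \neq 1$, so either $F = \emptyset$ (and $v$ is already a $0/1$ vector with $k$ ones) or there exist distinct $i,j \in F$. Moving along the direction $e_i - e_j$ in both senses inside $P$ until a bound is hit produces two points $v^+, v^- \in P$ with $|F|$ strictly smaller and $v$ a convex combination of them; iterating at most $n$ times writes $v$ as a convex combination of $0/1$ vectors in $V_{k,n}$. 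I expect no real obstacle beyond bookkeeping in either approach.
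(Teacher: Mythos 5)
Your proof is correct. The paper states this lemma without proof, citing~\cite{schrijver-combop-B}, so there is no in-paper argument to compare against. Both routes you sketch are standard and sound: the polyhedral one correctly observes that among any $n$ linearly independent tight constraints at a vertex of $P$ at most one can be the equality $\mathbf{1}^\tra v = k$, forcing at least $n-1$ distinct coordinates into $\{0,1\}$, with the integrality of $k$ then pinning the last coordinate to $\{0,1\}$ as well; and the exchange argument correctly uses $|F| \neq 1$ (again by integrality of $k$) together with a two-sided walk along $e_i - e_j$ to strictly decrease $|F|$ at both endpoints, yielding a terminating recursive decomposition of any $v \in P$ into a convex combination of $0/1$ vectors of weight $k$. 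Either version would serve as a complete proof of the lemma.
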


Before we prove Theorem~\ref{thm:nuclear-small}, we need one more
technical lemma.

\begin{lemma}\label{lm:subgr-soln}
  Let $\Sigma$ be an $m\times m$ positive semidefinite matrix of rank at
  least $k$. Then there exists an $m\times m$ positive definite matrix
  $X$ such that $\Sigma \in -\partial g_k(X)$, and $g_k(X) = \|X^{-1}\|_{(k)} = h_k(\Sigma)$. 
\end{lemma}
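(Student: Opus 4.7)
The plan is to construct $X$ explicitly, choosing its eigen-decomposition to mirror that of $\Sigma$ but flattening the bottom eigenvalues below the threshold from Lemma~\ref{lm:singvals-thresh}. Diagonalize $\Sigma = V\diag(\sigma_1,\ldots,\sigma_m)V^\tra$ with $\sigma_1\geq\cdots\geq\sigma_m\geq 0$, let $t$ be the integer from Lemma~\ref{lm:singvals-thresh}, set $\tau \triangleq (\sum_{i>t}\sigma_i)/(k-t)$, and define
\[
X^{-1} \triangleq V\diag(\lambda_1,\ldots,\lambda_m)V^\tra, \qquad \lambda_i = \begin{cases}\sigma_i^{1/2} & i\leq t,\\ \tau^{1/2} & i>t.\end{cases}
\]
Because $\rank\Sigma\geq k>t$ we get $\tau>0$, and $\sigma_t>\tau$ (when $t\geq 1$) gives $\lambda_i>0$ for all $i$, so $X\succ 0$. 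The top-$k$ eigenvalues of $X^{-1}$ are $\sigma_1^{1/2},\ldots,\sigma_t^{1/2}$ followed by $k-t$ copies of $\tau^{1/2}$, hence
\[
g_k(X)=\sum_{i=1}^t\sigma_i^{1/2}+(k-t)\tau^{1/2}=h_k(\Sigma),
\]
where the last equality uses $(k-t)\tau=\sum_{i>t}\sigma_i$ to rewrite the defining expression for $h_k$.

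Next I would verify $\Sigma\in -\partial g_k(X)$ by unpacking Lemma~\ref{lm:kyfan-subgr}. Split $V=[V_1\mid V_2]$, separating off the first $t$ columns. The top-$t$ eigenvalues of $X^{-1}$ are strict and the remaining $m-t$ are all tied at $\tau^{1/2}$, so every orthonormal $U$ diagonalizing $X^{-1}$ in sorted order has the form $U=[V_1\mid V_2 O]$ for some orthogonal $O\in\R^{(m-t)\times(m-t)}$, and the relevant index sets $S$ in the lemma have size $k$ and must contain $[t]$ (the strict top-$t$ eigenvectors must appear in any $k$-eigenvector selection). Writing $S=[t]\cup(t+S')$ with $|S'|=k-t$, and using $V_1^\tra V_2=0$ together with $O_{S'}^\tra O_{S'}=I_{k-t}$, a direct block calculation gives
\[
U_S U_S^\tra X^{-2} U_S U_S^\tra = V_1\diag(\sigma_1,\ldots,\sigma_t)V_1^\tra + \tau\, V_2 O_{S'}O_{S'}^\tra V_2^\tra.
\]
Thus $-\partial g_k(X)$ consists of matrices $V_1\diag(\sigma_1,\ldots,\sigma_t)V_1^\tra + \tau\,V_2 M V_2^\tra$, with $M$ ranging over the convex hull of rank-$(k-t)$ orthogonal projections in $\R^{m-t}$. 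By Lemma~\ref{lm:unik-poly} applied in the eigenbasis of a candidate $M$, this convex hull equals $\{M\succeq 0 : M\preceq I_{m-t},\ \tr M=k-t\}$.

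Finally, choose $M\triangleq \tau^{-1}\diag(\sigma_{t+1},\ldots,\sigma_m)$: it is positive semidefinite, $M\preceq I$ holds since $\sigma_i\leq\sigma_{t+1}\leq\tau$ for $i>t$, and $\tr M=\tau^{-1}\sum_{i>t}\sigma_i=k-t$ by the definition of $\tau$, so $M$ lies in the convex hull above. The corresponding element of $-\partial g_k(X)$ is $V_1\diag(\sigma_1,\ldots,\sigma_t)V_1^\tra + V_2\diag(\sigma_{t+1},\ldots,\sigma_m)V_2^\tra=\Sigma$, completing the verification.

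The main obstacle I anticipate is carefully unpacking Lemma~\ref{lm:kyfan-subgr} for this specific $X$ to obtain the clean description of $-\partial g_k(X)$ above: identifying the tied-eigenspace block structure, observing that $[t]\subseteq S$ is forced by the strictness of the top-$t$ eigenvalues, and performing the block algebra on $U_S U_S^\tra X^{-2} U_S U_S^\tra$. Everything else reduces to Lemma~\ref{lm:unik-poly} and the defining identity for $\tau$.
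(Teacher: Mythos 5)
Your proof is correct and essentially mirrors the paper's argument: diagonalize $\Sigma$, use Lemma~\ref{lm:singvals-thresh} to pick the threshold index $t$, set $X^{-1}$ to share $\Sigma$'s eigenvectors with the top-$t$ eigenvalues replaced by $\sigma_i^{1/2}$ and the rest flattened to the common value $\tau^{1/2}$, and then invoke Lemmas~\ref{lm:kyfan-subgr} and~\ref{lm:unik-poly} to exhibit $\Sigma$ inside the subdifferential. The one cosmetic difference is that the paper additionally perturbs the eigenvalues of $X^{-1}$ in positions $i > \rank\Sigma$ downward by a small $\epsilon > 0$, so that the tied eigenvalue block of $X^{-1}$ at position $k$ is exactly $\{t+1,\ldots,r\}$ rather than your $\{t+1,\ldots,m\}$; your version dispenses with the perturbation and simply includes the kernel directions of $\Sigma$ in the tied block, which still works since your choice $M = \tau^{-1}\diag(\sigma_{t+1},\ldots,\sigma_m)$ has zeros there and remains in $\{M : 0 \preceq M \preceq I,\ \tr M = k-t\}$. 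Your explicit block computation of $U_S U_S^\tra X^{-2} U_S U_S^\tra$ and the observation that the strict top-$t$ eigenvalues force $[t]\subseteq S$ are steps the paper leaves implicit, and writing them out is a useful clarification rather than a deviation.
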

\begin{proof}
  Let $r = \rank~\Sigma$, and let $\sigma_1 \geq \ldots \geq \sigma_r$ be
  the non-zero singular values of $\Sigma$. Let $U D U^\tra = \Sigma$ be
  some singular value decomposition of $\Sigma$: $U$ is an orthonormal
  matrix and $D$ is a diagonal matrix with the $\sigma_i$ on the
  diagonal, followed by $0$s.

  Assume that $t$, $0 \leq t < k$, is the integer (guaranteed by
  Lemma~\ref{lm:singvals-thresh}) for which $ \sigma_t > \frac{\sum_{i
      > t}{\sigma_i}}{k - t} \geq \sigma_{t+1}$ and define $\alpha
  \triangleq \frac{\sum_{i > t}{\sigma_i}}{k - t}$. Since $t < k$ and
 we assumed $\Sigma$ has rank at least $k$, we have $\alpha > 0$.  Define
  \[
  \sigma'_{i} \triangleq
  \begin{cases}
    \sigma_{i} &i \leq t\\
    \alpha &t < i \leq r\\
    \alpha - \epsilon & i > r
  \end{cases},
  \]
  and set $D'$ be the diagonal matrix with the $\sigma'_i$ on the diagonal.
  We set $\epsilon$ to be an arbitrary number satisfying $\alpha >
  \epsilon > 0$. Let us set $X \triangleq (UD'U^\tra)^{-1/2}$. By
  Lemma~\ref{lm:unik-poly} and and the choice of $t$, the vector
  $(\sigma_{t+1}, \ldots, \sigma_r)$ is an element of the polytope
  $\alpha V_{k-t, r-t}$. Then $\Sigma$ is an element of $\conv\{U_SU_S^\tra
  X^{-2} U_S U_S^\tra: S = [t] \cup T, T \subseteq \{t+1, \ldots, r\},
  |T| = k-t\}$. Since this set is a subset of $-\partial g_k(X)$, we
  have $\Sigma \in -\partial g_k(X)$. A calculation shows that
  $\|X^{-1}\|_{(k)} = \|(UD'U^\tra)^{1/2}\|_{(k)} = \sum_{i \leq
    t}{\sigma_i^{1/2}} + (k-t)\alpha^{1/2} = h_k(\Sigma)$. This completes
  the proof. 
\end{proof}

\begin{proof}[of Theorem~\ref{thm:nuclear-small}]
  We will use standard notions from the theory of convex duality. For
  a reference, see the book by Boyd and Vandenberghe~\cite{BoydV-cvx}.

  Let us define $\{X: X \succ 0\}$ to be the domain for the
  constraints \eqref{eq:ellips-enclose-small} and the objective
  function \eqref{eq:ellips-obj-small}. This makes the constraint $X
  \succ 0$ implicit.  The optimization problem is convex by
  Lemma~\ref{lm:ellips-program-small}. Is is also always feasible: for
  example for $r$ an upper bound on the Euclidean norm of the longest
  column of $A$, $\frac{1}{r}I$ is a feasible
  solution. Slater's condition is therefore satisfied, since the
  constraints are affine, and, therefore, strong duality holds.

  The Lagrange dual function for
  \eqref{eq:ellips-obj-small}--\eqref{eq:ellips-enclose-small} is
  \begin{equation*}
    g(p) = \inf_{X \succ 0}{\|X^{-1}\|_{(k)} + \sum_{e \in \univ}{p_e(a_e^\intercal Xa_e - 1)} },
  \end{equation*}
  with dual variables $p \in \R^\univ$, $p \geq 0$. Equivalently, we
  can define the diagonal matrix $P \in \R^{\univ\times \univ}$, $P \succeq
  0$, with entries $p_{ee} = p_e$, and the dual function becomes
  \begin{equation}\label{eq:g-raw-small}
    g(P) = \inf_{X\succ 0}{\|X^{-1}\|_{(k)} + \tr(APA^\intercal X) - \tr(P)}
  \end{equation}
  Since the terms $\|X^{-1}\|_{(k)}$ and $\tr(APA^\intercal X)$ are
  non-negative for any $X \succ 0$, $g(P) \geq -\tr(P) >
  -\infty$. Therefore, the effective domain $\{P: g(P) > -\infty\}$ of
  $g(P)$ is $\{P: P\succeq 0, \text{ diagonal}\}$. Since we have
  strong duality, $\mu^2 = \max\{g(P): P\succeq 0, \text{ diagonal}\}$.
  
  By the additivity of subgradients, a matrix $X$  achieves the minimum
  in \eqref{eq:g-raw-small} if and only if $APA^\tra \in -\partial g_k(X)$,
  where $g_k(X) = \|X^{-1}\|_{(k)}$.  Consider first the case in which
  $APA^\tra$ has rank at least $k$. Then, by Lemma~\ref{lm:subgr-soln},
  there exists an $X$ such that $APA^\tra \in -\partial g_k(X)$ and
  $\|X^{-1}\|_{(k)} = h_k(APA^\tra)$. Observe that, if $U$ is an $m
  \times k$ matrix such
  that $U^\tra U = I$ and $\tr(U^\tra X^{-1} U) = \|X^{-1}\|_{(k)}$,
  then 
  \[
  \tr(UU^\tra X^{-2} UU^\tra X) = \tr((U^\tra X^{-2}U)(U^\tra X U))
  = \tr(U^\tra X^{-1} U) = \|X^{-1}\|_{(k)}.
  \]
  Since, by Lemma~\ref{lm:kyfan-subgr} and $APA^\tra \in -\partial g_k(X)$, $APA^\tra$ is a convex combination of matrices $UU^\tra X^{-2}  UU^\tra$ with $U$ as above, it follows that $\tr(APA^\tra X) =
  \|X^{-1}\|_{(k)}$. Then we have
  \begin{align}
  g(P) &= \|X^{-1}\|_{(k)} + \tr(APA^\intercal X) - \tr(P)\notag\\
  &= 2\|X^{-1}\|_{(k)} - \tr(P) = 2h_k(APA^\tra) - \tr(P).\label{eq:g-small-raw}
  \end{align}

  If $P$ is such that $APA^\tra$ has rank less than $k$, we can reduce
  to the rank $k$ case by a continuity argument. Fix any non-negative
  diagonal matrix $P$ and for $\lambda \in [0,1]$ define $P(\lambda)
  \triangleq \lambda P + (1-\lambda)I$. For any $\lambda \in [0, 1)$,
  $AP(\lambda)A^\tra$ has rank $|\quer|$, since $AA^\tra$ has rank
  $|\quer|$ by assumption, and, therefore, $AP(\lambda)A^\tra \succeq
  \lambda AA^\tra \succ 0$. Then, by Corollary
  7.5.1.~in~\cite{Rockafellar}, and \eqref{eq:g-small-raw}, we have
  \begin{align*}
  g(P) = \lim_{\lambda \uparrow 1}{g(P(\lambda))} &=
  \lim_{\lambda \uparrow 1}{[2h_k(AP(\lambda)A^\tra) - \lambda\tr(P)
  - (1-\lambda) |\quer|]} \\
  &=  2h_k(APA^\tra) - \tr(P).
  \end{align*} 
  The final equality follows from the continuity of $h_k$, proved in
  Lemma~\ref{lm:hk-contin}. 
  
  Let us define new variables $Q$ and $c$, where $c = \tr(P)$ and $Q =
  P/c$. Because $h_k$ is homogeneous with exponent $1/2$, we can
  re-write $g(P)$ as $ g(P) = g(Q,c) = 2\sqrt{c}h_k(AQA^\tra) - c$.
  From the first-order optimality condition $\frac{\partial
    g}{\partial c} = 0$, we see that maximum of $g(Q,c)$ is achieved
  when $c = h_k(AQA^\tra)^2$ and is equal to
  $h_k(AQA^\tra)^2$. Therefore maximizing $g(P)$ over diagonal
  positive semidefinite $P$ is equivalent to the optimization problem
  \eqref{eq:nuclear-obj2-small}--\eqref{eq:nuclear-pos2-small}. Since,
  by strong duality, the maximum of $g(P)$ is equal to the optimal
  value of
  \eqref{eq:ellips-obj-small}--\eqref{eq:ellips-enclose-small}, this
  completes the proof. 
\end{proof}

\subsection{Proof of  Theorem~\ref{thm:main-smalldb}}

Our strategy will be to use the dual formulation in
Theorem~\ref{thm:nuclear-small} and the restricted invertibility
principle to give a lower bound on $\specLB(k, A)$. First we state the
restricted invertiblity principle and a consequence of it proved in~\cite{apx-disc}.

\begin{theorem}[\cite{bour-tza,bt-constructive}]\label{thm:bt}
  Let $\epsilon \in (0,1)$, let $M$ be an $m\times n$ real matrix, and
  let $W$ be an $n\times n$ diagonal matrix such that $W \succeq 0$
  and $\tr(W) = 1$. For any integer $k$ such that $k \leq \epsilon^2
  {\tr(MWM^\tra)}/{\|MWM^\tra\|_2}$ there exists a subset
  $S \subseteq [n]$ of size $|S| = k$ such that
  $\sigma_{\min}(M_S)^2 \geq (1-\epsilon)^2\tr(MWM^\tra)$. 
\end{theorem}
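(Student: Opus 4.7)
The plan is to prove Theorem~\ref{thm:bt} via the constructive barrier method of Batson--Spielman--Srivastava, extended to the weighted setting. Let $v_1,\ldots,v_n$ denote the columns of $M$ with weights $w_i = W_{ii}$, and set $V = MWM^\tra = \sum_i w_i v_i v_i^\tra$. The idea is to build $S$ greedily, one index at a time, maintaining a lower barrier that tracks the minimum eigenvalue of the partial Gram matrix $A_j = \sum_{t \leq j} v_{i_t} v_{i_t}^\tra$. After $k$ steps I will have $\lambda_{\min}(A_k) \geq (1-\epsilon)^2 \tr(V)$, which coincides with $\sigma_{\min}(M_S)^2$ provided $M_S$ has full column rank (which the barrier argument will automatically guarantee).

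I introduce the lower-barrier potential $\phi_\ell(A) \triangleq \tr((A - \ell I)^{-1})$, valid whenever all nonzero eigenvalues of $A$ exceed $\ell$. Starting from $A_0 = 0$ and a suitably chosen initial barrier $\ell_0$, at each step I pick an index $i_{j+1}$, set $A_{j+1} = A_j + v_{i_{j+1}} v_{i_{j+1}}^\tra$, and advance the barrier to $\ell_{j+1} = \ell_j + \delta$ for a fixed step size $\delta = (1-\epsilon)^2 \tr(V)/k$, maintaining the invariant $\phi_{\ell_j}(A_j) \leq \phi_{\ell_0}(0)$. A Sherman--Morrison calculation expresses the change in potential after such an update as a scalar functional of $v_i$ built out of the quadratic forms $v_i^\tra B^{-p} v_i$ for $p = 1,2$, where $B = A_j - \ell_{j+1} I$; maintaining the invariant is then equivalent to this functional being non-positive at the chosen column.

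The crux of the argument, and the place where the weights enter, is to show that a valid index exists at each step. Taking the weighted average of the scalar condition over $i$ with weights $w_i$, sums of the form $\sum_i w_i\, v_i^\tra B^{-p} v_i$ collapse to the traces $\tr(B^{-p} V)$. Standard bounds such as $\tr(B^{-1} V) \leq \|V\|_2\, \tr(B^{-1})$ together with the hypothesis $\|V\|_2 \leq \epsilon^2 \tr(V)/k$ are then enough to show that the weighted average of the single-step gain is non-negative, so at least one column achieves it. Iterating for $k$ steps gives $\lambda_{\min}(A_k) \geq \ell_0 + k\delta \geq (1-\epsilon)^2 \tr(V)$, completing the proof.

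The main obstacle is the weighted averaging step. In the unweighted Batson--Spielman--Srivastava setting one has $w_i = 1/n$ and the averages become $\tr(B^{-p})/n$, which reduces everything to elementary manipulations of the eigenvalues of $B$. In the weighted version one must instead control $\tr(B^{-p} V)$ using only the operator-norm hypothesis on $V$; this is where the ratio $\tr(V)/\|V\|_2$ appearing in the premise is used in an essential way to balance the gain in the lower barrier against the shift in the potential. Degenerate cases where $V$ is not of full rank require restricting the potential $\phi_\ell$ to the range of $V$, but this is a routine modification.
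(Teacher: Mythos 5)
The paper does not actually prove Theorem~\ref{thm:bt}: it is imported as a black box from Bourgain--Tzafriri and its constructive proof by Spielman--Srivastava (with the weighted form as adapted in the cited follow-up work), so there is no in-paper argument to compare against. Your outline is essentially that known constructive route rather than an alternative, but as a proof it has genuine gaps at precisely the step you defer to the end. The existence of a good column at each step is the whole theorem, and the tool you propose for it does not work as stated: with a positive barrier $\ell_{j+1}$ and a rank-deficient $A_j$ (rank $j<m$), the matrix $B=A_j-\ell_{j+1}I$ is indefinite, so $B^{-1}$ is not positive semidefinite, $\tr(B^{-1})$ contains $m-j$ negative kernel terms, and the inequality $\tr(B^{-1}V)\le \|V\|_2\,\tr(B^{-1})$ is false in general (e.g.\ $B=\diag(1,-1)$, $V=\diag(1,0)$ gives $1\le 0$). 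Relatedly, the invariant and conclusion are mis-stated: $\phi_{\ell_0}(0)=-m/\ell_0<0$ when $\ell_0>0$, and $\lambda_{\min}(A_k)=0$ whenever $k<m$; what must be driven above the barrier is the least \emph{nonzero} eigenvalue, and the argument must separately certify that the rank increases at every step (otherwise $M_S$ has dependent columns and $\sigma_{\min}(M_S)$ has nothing to do with the nonzero spectrum of $A_k$). Handling the kernel contributions --- showing the selected column has a component off the range of $A_j$, that the newly created eigenvalue lands above the advanced barrier, and that the potential does not increase, with the weighted averaging carried out for these corrected quantities --- is exactly the technical content of the Spielman--Srivastava argument; it is not the ``elementary manipulation'' your sketch suggests, and the non-isotropic normalization $\sum_i w_iv_iv_i^\tra=V\neq I$ adds a further change-of-variables (or restriction to the range of $V$) that has to be done carefully.

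If your aim is only to establish the weighted statement as quoted, there is a cleaner route than redoing the barrier analysis: take the unweighted constructive theorem (columns $u_1,\dots,u_{n'}$, bound $k\le\epsilon^2\|L\|_{\tr[HS]}^2/\|L\|_2^2$ ... i.e.\ $k\le\epsilon^2\tr(LL^\tra)/\|LL^\tra\|_2$, conclusion $\sigma_{\min}(L_{S'})^2\ge(1-\epsilon)^2\tr(LL^\tra)/n'$) as given, and for rational weights $w_i=c_i/N$ apply it to the matrix $L$ obtained by repeating column $v_i$ exactly $c_i$ times. Then $\tr(LL^\tra)=N\tr(MWM^\tra)$ and $\|LL^\tra\|_2=N\|MWM^\tra\|_2$, so the hypothesis and conclusion become exactly those of Theorem~\ref{thm:bt}; since the conclusion forces $\sigma_{\min}>0$, the $k$ selected duplicated columns are linearly independent, hence distinct originals, and irrational weights follow by a routine limiting argument over the finitely many subsets $S$. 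Either close the weighted averaging step in the barrier argument honestly (following Spielman--Srivastava's treatment of the kernel directions) or use this reduction; as written, the proposal's central step is both unproved and supported by an invalid inequality.
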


For the following lemma, which is a consequence of
Theorem~\ref{thm:bt}, we need to recall the definition of the trace
(nuclear) norm of a matrix $M$: $\|M\|_{{\tr}}$ is equal to the
sum of singular values of $M$.

\begin{lemma}[\cite{apx-disc}]\label{lm:bt-lb}
  Let $M$ be an $m$ by $n$ real matrix of rank $r$, and let $W \succeq
  0$ be a diagonal matrix such that $\tr(W) = 1$. Then there exists a
  submatrix $M_S$ of $M$, $|S| \leq r$, such that
  $|S|\sigma_{\min}(M_S)^2 \geq {c^2\|MW^{1/2}\|_{\tr}^2}/{(\log
    r)^2}$, for a universal constant $c > 0$.
\end{lemma}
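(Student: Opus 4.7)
The plan is to apply the restricted invertibility principle (Theorem~\ref{thm:bt}) not to $M$ itself but to a suitable projection of $M$ onto a carefully chosen band of singular directions of $MW^{1/2}$. The reason is that Theorem~\ref{thm:bt} gives a bound in terms of the Frobenius-type quantity $\tr(MWM^\tra) = \|MW^{1/2}\|_F^2$, whereas the conclusion is phrased in terms of the trace norm $\|MW^{1/2}\|_{\tr}$; these two quantities are comparable only when the nonzero singular values of $MW^{1/2}$ are all of the same order. To exploit this, I would decompose the singular values into $O(\log r)$ dyadic bands and argue that one band captures a large share of the trace norm while being essentially flat.

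Concretely, let $MW^{1/2} = U\Sigma V^\tra$ be a singular value decomposition with singular values $\tau_1 \geq \tau_2 \geq \ldots \geq \tau_r > 0$, and write $T \triangleq \|MW^{1/2}\|_{\tr} = \sum_{i=1}^r \tau_i$. The singular values smaller than $\tau_1/r$ contribute at most $\tau_1 \leq T$ to $T$, so after discarding them we lose at most a constant factor (and if $\tau_1$ alone already dominates $T$, taking $S$ of size $1$ corresponding to any column $a$ with $\|a\|_2 \gtrsim \tau_1$ finishes the proof). Otherwise, partition $\{i : \tau_i \in [\tau_1/r, \tau_1]\}$ into $O(\log r)$ groups $G_j = \{i : \tau_1/2^j \leq \tau_i < \tau_1/2^{j-1}\}$. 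By pigeonhole there exists a band $G = G_{j^*}$ with $\sum_{i\in G}\tau_i \geq c_1 T/\log r$, and all $\tau_i$ for $i\in G$ lie within a factor of $2$ of some common value $\tau$. In particular, $|G|\tau \geq c_1 T/\log r$.

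Now let $P$ be the orthogonal projection onto the span of the left singular vectors $\{U_i : i \in G\}$, and consider $N \triangleq PM$. Then $NW^{1/2}$ has nonzero singular values exactly $\{\tau_i\}_{i\in G}$, so
\[
\tr(NWN^\tra) = \sum_{i\in G}\tau_i^2 \geq \tfrac{1}{4}|G|\tau^2, \qquad \|NWN^\tra\|_2 = \max_{i\in G}\tau_i^2 \leq 4\tau^2.
\]
Hence $\tr(NWN^\tra)/\|NWN^\tra\|_2 \geq |G|/16$. Applying Theorem~\ref{thm:bt} to $N$ and $W$ with $\epsilon = 1/2$ and $k = \lfloor |G|/64 \rfloor$ yields a set $S \subseteq [n]$, $|S|=k$, with $\sigma_{\min}(N_S)^2 \geq \tfrac{1}{4}\tr(NWN^\tra) \geq \tfrac{1}{16}|G|\tau^2$. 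Since $\|M_S x\|_2 \geq \|P M_S x\|_2 = \|N_S x\|_2$ for every $x$, we have $\sigma_{\min}(M_S) \geq \sigma_{\min}(N_S)$, so
\[
|S|\sigma_{\min}(M_S)^2 \;\geq\; k \cdot \tfrac{1}{16}|G|\tau^2 \;\geq\; c_2 \, |G|^2 \tau^2 \;\geq\; c_2\Bigl(\frac{c_1 T}{\log r}\Bigr)^2 \;=\; \frac{c^2\,\|MW^{1/2}\|_{\tr}^2}{(\log r)^2},
\]
as desired. Finally $|S| \leq |G| \leq r$, so the size constraint is met.

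The only genuinely delicate point is the interplay between the projection and the column restriction: one must verify that applying Theorem~\ref{thm:bt} to $N = PM$ (with the \emph{same} weight matrix $W$) is legitimate, and that $\sigma_{\min}(N_S)$ lower bounds $\sigma_{\min}(M_S)$ rather than the reverse. Both are straightforward from the definitions, but they are what make the dyadic reduction go through. The logarithmic loss is tight for this argument and is the source of the $(\log r)^{-2}$ factor: one $\log r$ comes from pigeonholing across bands, and the square appears because that loss enters a quantity that is itself squared in the conclusion.
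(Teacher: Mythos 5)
Correct, and essentially the same argument as in the cited source \cite{apx-disc} (the paper itself imports the lemma without proof): dyadic banding of the singular values of $MW^{1/2}$, pigeonholing a band that carries a $\Omega(1/\log r)$ fraction of $\|MW^{1/2}\|_{\tr}$, projecting onto that band's left singular space, and applying Theorem~\ref{thm:bt} to $PM$ with the same weight matrix $W$, then using $\sigma_{\min}(M_S)\geq\sigma_{\min}(PM_S)$. The only loose ends are minor and patched by the single-column observation you already invoke (some column $a_e$ of $M$ satisfies $\|a_e\|_2^2\geq \tr(MWM^\tra)\cdot 1 \geq \|MW^{1/2}\|_2^2$ because $\tr(W)=1$): it justifies the $\tau_1$-dominant case, and it also covers the case $|G|<64$, where $k=\lfloor |G|/64\rfloor=0$ would make the application of Theorem~\ref{thm:bt} vacuous.
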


\begin{proof}[of Theorem~\ref{thm:main-smalldb}]
  Given a database size $n$ and a query matrix $A$, we compute the
  covariance matrix $\Sigma$ as follows. We compute a matrix $X$ which
  gives an (approximately) optimal solution to
  \eqref{eq:ellips-obj-small}--\eqref{eq:ellips-enclose-small} for  $k \triangleq
  \lfloor \eps n \rfloor$, and we
  set $\Sigma\triangleq X^{-1}$. Since
  \eqref{eq:ellips-obj-small}--\eqref{eq:ellips-enclose-small} is a
  convex optimization problem, it can be solved in time polynomial in
  $|\quer|$ to any degree of accuracy using the ellipsoid
  algorithm~\cite{GLS-ellipsoid} (or the algorithm of Overton and
  Womersley~\cite{OvertonW93-kyfan}). By Lemma~\ref{lm:proj-err} and
  the constraints \eqref{eq:ellips-enclose-small},
  $\mech_\Sigma^{\text{proj}}$ is $(\eps, \delta)$-differentially
  private with this choice of $\Sigma$.

  By Lemma~\ref{lm:proj-err},
  \begin{equation}\label{eq:proj-err}
    \err(\alg_\Sigma^{\text{proj}}, n, A) =O\Biggl(\Bigl(1+\frac{\sqrt{\log |U|}}{\sqrt{\log
      1/\delta}}\Bigr)^{1/2}\Biggr) \cdot \frac{c_{\eps,\delta}}{\sqrt{|\quer|}} \|\Sigma\|_{(k)}.
  \end{equation}
  By Theorem~\ref{thm:nuclear-small}, the optimal solution $Q$ of
  \eqref{eq:nuclear-obj2-small}--\eqref{eq:nuclear-pos2-small}
  satisfies
  \[
  \|\Sigma\|_{(k)} = h_k(AQA^\tra) = \sum_{i = 1}^t{\lambda_i^{1/2}} +
  \sqrt{k-t}\left(\sum_{i > t}{\lambda_i}\right)^{1/2},
  \]
  where $\lambda_1 \geq \ldots \geq \lambda_m$ are the eigenvalues
  of $AQA^\tra$ and $t$, $0\leq t < k$, is an integer such that $(k-t)\lambda_t > \sum_{i >
    t}\lambda_i \geq (k-t)\lambda_{t+1}$. At least one of $\sum_{i =
    1}^t{\lambda_i^{1/2}}$ and $\sqrt{k-t}\left(\sum_{i >
      t}{\lambda_i}\right)^{1/2}$ must be bounded from below by
  $\frac{1}{2}\|\Sigma\|_{(k)}$. Next we consider these two cases separately.

  Assume first that $\sum_{i = 1}^t{\lambda_i^{1/2}} \geq
  \frac{1}{2}\|\Sigma\|_{(k)}$. Let $\Pi$ be the orthogonal projection
  operator onto the eigenspace of $AQA^\tra$
  corresponding to $\lambda_1, \ldots, \lambda_t$. Then, because
  $\lambda_1 \geq \ldots \geq \lambda_t$ are the nonzero singular
  values of $\Pi AQ^{1/2}$, we have $\|\Pi
  AQ^{1/2}\|_{\tr}=\sum_{i = 1}^t{\lambda_i^{1/2}} \geq
  \frac{1}{2}\|\Sigma\|_{(k)}$. By Lemma~\ref{lm:bt-lb} applied to
  the matrices $M = \Pi A$ and $W = Q$, there exists a set $S
  \subseteq \univ$ of size at most $|S| \leq \rank~\Pi A = t< \eps n$,
  such that
  \begin{align}
    \specLB(\eps n, A) &\geq
    \sqrt{\frac{|S|}{|\quer|}}\ \lambda_{\min}(A_S) \notag\\
    &\geq  \sqrt{\frac{|S|}{|\quer|}}\ \lambda_{\min}(\Pi A_S) \geq \frac{c
      \|\Pi AQ^{1/2}\|_{\tr}}{(\log \eps n)\sqrt{|\quer|}} \geq
    \frac{c\|\Sigma\|_{(k)}}{2(\log \eps n) \sqrt{|\quer|}}\label{eq:main-lb-small-1}
  \end{align}
  for an absolute constant $c$.

  For the second case, assume that $\sqrt{k-t}\left(\sum_{i >
      t}{\lambda_i}\right)^{1/2} \geq
  \frac{1}{2}\|\Sigma\|_{(k)}$. Let $\Pi$ now be an orthogonal
  projection operator onto the eigenspace of $AQA^\tra$ corresponding
  to $\lambda_{t+1},\ldots, \lambda_{m}$. By the choice of $t$, we
  have
  \[
  \frac{\tr(\Pi AQA\Pi)}{\|\Pi AQA\Pi\|_2} =
  \frac{\sum_{i > t}\lambda_i}{\lambda_{t+1}} \geq k-t.
  \]
  By Theorem~\ref{thm:bt}, applied with $M = \Pi A$, $W = Q$, and
  $\eps = \frac{1}{2}$, there exists a set $S \subseteq U$ of size
  $\frac{1}{4}(k-t) < k \leq \eps n$ so that
  \begin{align}
    \specLB_2(\eps n, A) &\geq
    \sqrt{\frac{|S|}{|\quer|}}\lambda_{\min}(A_S) \notag\\
    &\geq  \sqrt{\frac{|S|}{|\quer|}}\lambda_{\min}(\Pi A_S) 
    \geq \frac{\sqrt{k-t}\left(\sum_{i > t}\lambda_i\right)^{1/2} }{4\sqrt{|\quer|}}
    \geq \frac{\|\Sigma\|_{(k)}}{8\sqrt{|\quer|}}.\label{eq:main-lb-small-2}
  \end{align}
  The theorem follows from \eqref{eq:proj-err}, the fact that at least
  one of \eqref{eq:main-lb-small-1} or \eqref{eq:main-lb-small-2} holds,
  and Theorem~\ref{thm:speclb-small}. 
\end{proof}

\section{Conclusion}

Several natural problems remain open. Probably the most important one
is to prove results analogous to ours for worst case, rather than
average, error. In that case the simple post-processing strategy of
the projection mechanism will likely not be sufficient. Another
interesting problem is to remove the dependence on the universe size
in the competetiveness ratio. It is plausible that this can be done
with the projection mechanism and a well-chosen Gaussian noise
distribution, but we would need tighter lower bounds, possibly based
on fingerprinting codes as in~\cite{BunUV13}.

\section*{Acknowledgments}
The author would like to thank the anonymous reviewers of ICALP 2015 for helpful comments.

%
%
\bibliographystyle{alpha}
\bibliography{Privacy,mypapers}

\end{document}